\documentclass[11pt]{article}

\usepackage{fullpage}
\usepackage{times}
\usepackage{amssymb,amsmath}
\usepackage{paralist}
\usepackage{bm}
\usepackage{epsfig}
\usepackage{amsfonts}
\usepackage[mathscr]{eucal}
\usepackage{xspace}
\usepackage{url}

\setlength{\topsep}{4pt}
\newtheorem{theorem}{Theorem}

\newtheorem{definition}{Definition}[section]

\newtheorem{assumption}{Assumption}

\newtheorem{lemma}{Lemma}

\newcommand{\QED}{\ensuremath{\Box}}
\newenvironment{proof}{\paragraph{{\sc Proof:}}}{\hfill \QED \medskip}

\newenvironment{sketch}{\paragraph{\sc Proof sketch:}}{\hfill $\QED$ \medskip}

\newcommand{\superparagraphb}[1]{\noindent{\bf{#1}}}

\newcommand{\from}{\leftarrow}

\newcommand{\pr}[1]{\ensuremath{{\mathrm{Pr}}[#1]}}

\newcommand{\poly}{\ensuremath{\text{\em poly}}}

\newcommand{\NOTE}[1]{}
\newcounter{todos}
\setcounter{todos}{0}

\newcommand{\namedref}[3]{#1~\ref{#2:#3}}

\newcommand{\sectionref}[1]{\namedref{Section}{sec}{#1}}

\newcommand{\superparagraph}[1]{\smallskip\noindent {\bf #1}}

\newcommand{\ZZ}{\ensuremath{\mathbb Z}\xspace}

\newcommand{\functionality}[1]{\ensuremath{{\mathcal #1}}\xspace}
\newcommand{\thing}[1]{\ensuremath{{\text{\sf #1}}}\xspace}
\newcommand{\keyword}[1]{\ensuremath{{\text{\sc #1}}}\xspace}
\newcommand{\command}[1]{\ensuremath{{\text{\sf #1}}}\xspace}

\newcommand{\class}[1]{\ensuremath{\text{\bf{#1}}}\xspace}

\newcommand{\NC}[1]{\ensuremath{\class{NC}^{#1}}\xspace}
\newcommand{\NCz}{\ensuremath{\NC0}\xspace}

\newcommand{\codegen}{\ensuremath{\mathcal{G}}\xspace}
\newcommand{\randcodegen}{\ensuremath{\mathcal{G}_{\command{Rand}}}\xspace}
\newcommand{\rscodegen}{\ensuremath{\mathcal{G}_{\command{RS}}}\xspace}
\newcommand{\ringcodegen}{\ensuremath{\mathcal{G}_{\command{Ring}}}\xspace}
\newcommand{\lincode}[4]{\ensuremath{\mathcal{E}_{(#1,#3)}^{{#2}}({#4})}\xspace}
\newcommand{\encode}[4]{\ensuremath{\mathsf{Encode}_{(#1,#3)}^{{#2}}({#4})}\xspace}

\newcommand{\add}{\command{add}}
\newcommand{\subtract}{\command{subtract}}
\newcommand{\mult}{\command{multiply}}

\newcommand{\sample}{\command{sample}}
\newcommand{\one}{\command{one}}
\newcommand{\operation}{\command{op}}
\newcommand{\invert}{\command{invert}}

\newcommand{\ringid}{\command{id}}
\newcommand{\rep}{\command{label}}

\newcommand{\ideal}{\keyword{ideal}}
\newcommand{\real}{\keyword{real}}

\newcommand{\ring}{\ensuremath{R}\xspace}
\newcommand{\field}{\ensuremath{F}\xspace}

\newcommand{\ringfamily}{\ensuremath{\mathscr{R}}\xspace}
\newcommand{\fieldfamily}{\ensuremath{\mathscr{F}}\xspace}
\newcommand{\ringalgo}{\ensuremath{\mathsf{R}}\xspace}

\newcommand{\hash}{\ensuremath{\mathcal H}\xspace}
\newcommand{\unif}{\ensuremath{\mathcal U}\xspace}
\newcommand{\statdiff}{\ensuremath{\Delta}\xspace}

\newcommand{\statencode}{\ensuremath{\mathsf{Enc}^\ringfamily_n}\xspace}

\newcommand{\scode}{\ensuremath{\mathcal S}\xspace}

\newcommand{\Ffunc}{\functionality{F}}
\newcommand{\func}[1]{\ensuremath{\Ffunc_{#1}}\xspace}
\newcommand{\namedfunc}[1]{\func{\text{\sf #1}}}
\newcommand{\env}{\thing{Env}}
\newcommand{\envx}{\thing{Env}}
\newcommand{\adv}{\thing{Adv}}
\newcommand{\simx}{\thing{Sim}}
\newcommand{\simu}{\ensuremath{\widetilde{\simx}}\xspace}

\newcommand{\ckt}{\ensuremath{C}\xspace}
\newcommand{\prot}{\ensuremath{\pi}\xspace}
\newcommand{\fmult}{\namedfunc{pdt-shr}}
\newcommand{\fmultx}{\ensuremath{{\Ffunc}'_{\text{\sf pdt-shr}}}\xspace}
\newcommand{\pmult}{\ensuremath{\widetilde{\pi}^\OT}\xspace}
\newcommand{\pmultx}{\ensuremath{\pi^\OT}\xspace}
\newcommand{\pmultbasicx}{\ensuremath{\rho^\OT}\xspace}
\newcommand{\pmultlcx}{\ensuremath{\sigma^\OT}\xspace}
\newcommand{\envad}{\ensuremath{\widetilde{\env}}\xspace}

\newcommand{\fckt}[1]{\ensuremath{\namedfunc{\ckt}^{#1}}\xspace}
\newcommand{\fcktid}[1]{\ensuremath{\namedfunc{\ckt,\ringid}^{#1}}\xspace}

\newcommand{\fmultt}{\ensuremath{{\Ffunc}^{\field^t}_{\text{\sf pdt-shr}}}\xspace}
\newcommand{\pmulttx}{\ensuremath{\tau^\OT}\xspace}
\newcommand{\at}{\ensuremath{\mathbf{a}}\xspace}
\newcommand{\bt}{\ensuremath{\mathbf{b}}\xspace}
\newcommand{\zt}{\ensuremath{{z}}\xspace}
\newcommand{\vn}{\ensuremath{{v}}\xspace}

\newcommand{\alphat}{\ensuremath{{\bm \alpha}}\xspace}
\newcommand{\pa}{\ensuremath{P_a}\xspace}
\newcommand{\pb}{\ensuremath{P_b}\xspace}
\newcommand{\pc}{\ensuremath{P_r}\xspace}
\newcommand{\pz}{\ensuremath{Q}\xspace}

\newcommand{\x}[1]{\ensuremath{\zeta_{#1}}\xspace}
\newcommand{\y}[1]{\ensuremath{\vartheta_{#1}}\xspace}

\newcommand{\OT}{\ensuremath{\text{\sf OT}}\xspace}

\newcommand{\alice}{\ensuremath{A}\xspace}
\newcommand{\bob}{\ensuremath{B}\xspace}

\title{
Secure Arithmetic Computation with No Honest Majority}
\author{
     {Yuval Ishai}
     \thanks{Technion, Israel and University of California, Los Angeles. {\tt
      yuvali@cs.technion.il}}
   \and
     {Manoj Prabhakaran}
    \thanks{University of Illinois, Urbana-Champaign. {\tt mmp@cs.uiuc.edu}}
   \and
     {Amit Sahai}
    \thanks{University of California, Los Angeles. {\tt sahai@cs.ucla.edu}}
 }

\begin{document}
\maketitle

 \begin{abstract}

We study the complexity of securely evaluating arithmetic circuits over finite rings. This question is motivated by natural secure computation tasks. Focusing mainly on the case of {\em two-party} protocols with security against {\em malicious} parties, our main goals are to: (1) only make black-box calls to the ring operations and standard cryptographic primitives, and (2) minimize the number of such black-box calls as well as the communication overhead.

We present several solutions which differ in their efficiency, generality, and underlying intractability assumptions. These include:
\begin{itemize}
  \item An {\em unconditionally secure} protocol in the OT-hybrid model which makes a black-box use of an arbitrary ring $R$, but where the number of ring operations grows linearly with (an upper bound on) $\log|R|$.

  \item Computationally secure protocols in the OT-hybrid model which make a black-box use of an underlying ring, and in which the number of ring operations does not grow with the ring size.
The protocols rely on variants of previous intractability assumptions related to linear codes.
In the most efficient instance of these protocols, applied to a suitable class of fields, the (amortized) communication cost is a constant number of field elements per multiplication gate and the computational cost is dominated by $O(\log k)$ field operations per gate, where $k$ is a security parameter.
 These results extend a previous approach of Naor and Pinkas for secure polynomial evaluation ({\em SIAM J.\ Comput.}, 35(5), 2006).

  \item A protocol for the rings $\mathbb{Z}_m=\mathbb{Z}/m\mathbb{Z}$ which only makes a black-box use of a homomorphic encryption scheme.  When $m$ is prime, the (amortized) number of calls to the encryption scheme for each gate of the circuit is constant.
\end{itemize}
All of our protocols are in fact {\em UC-secure} in the OT-hybrid model and can be generalized to {\em multiparty} computation with an arbitrary number of malicious parties.
\end{abstract}

 \thispagestyle{empty}
 \newpage
 \setcounter{page}{1}

\section{Introduction}
\label{sec:intro}

This paper studies the complexity of secure multiparty computation (MPC) tasks which involve {\em arithmetic} computations. Following the general feasibility results from the 1980s~\cite{Yao86,GoldreichMiWi87,BenOrGoWi88,ChaumCrDa88}, much research in this area shifted to efficiency questions, with a major focus on the efficiency of securely distributing natural computational tasks that arise in the ``real world''. In many of these cases, some inputs, outputs, or intermediate values in the computation are integers, finite-precision reals, matrices, or elements of a big finite ring, and the computation involves arithmetic operations in this ring.  To name just a few examples from the MPC literature, such arithmetic computations are useful in the contexts of distributed generation of cryptographic keys~\cite{BonehF01,FrankelMY98,PoupardS98,Gilboa99,AlgesheimerCS02}, privacy-preserving data-mining and statistics~\cite{LindellP02,CanettiIKRRW01}, comparing and matching data~\cite{NaorP06,FreedmanNP04,HazayL08}, 
auctions and mechanism design~\cite{NaorPS99,DamgardFKNT06,Toft07thesis,BogetoftChDaGeJaKrNiNiNiPaScTo08}, and distributed linear algebra computations~\cite{CramerD01,NissimW06,KiltzMWF07,CramerKP07,MohasselW08}.

This motivates the following question:
\begin{quote}
What is the complexity of securely evaluating a given arithmetic circuit $C$ over a given finite ring $R$?
\end{quote}

Before surveying the state of the art, some clarifications are in place.

\superparagraph{Arithmetic circuits.} An arithmetic circuit over a ring is defined similarly to a standard boolean circuit, except that the inputs and outputs are ring elements rather than bits and gates are labeled by the ring operations  \add, \subtract, and \mult. (Here and in the following, by ``ring'' we will refer to a finite ring by default.)
In the current context of distributed computations, the inputs and outputs of the circuit are annotated with the parties to which they belong. Thus, the circuit $C$ together with the ring $R$ naturally define a multi-party arithmetic functionality $C^R$. Note that arithmetic computations over the integers or finite-precision reals can be embedded into a sufficiently large finite ring or field, provided that there is an a-priori upper bound on the bit-length of the output.
See Section~\ref{sec-discuss} for further discussion of the usefulness of arithmetic circuits and some extensions of this basic model to which our results apply.

\superparagraph{Secure computation model.} The main focus of this paper is on secure {\em two-party} computation or, more generally, MPC with an arbitrary number of malicious parties. (In this setting it is generally impossible to guarantee output delivery or even fairness, and one has to settle for allowing the adversary to abort the protocol after learning the output.) Our protocols are described in the ``OT-hybrid model,'' namely in a model that allows parties to invoke an ideal oblivious transfer (OT) oracle~\cite{Rabin81,EvenGoLe85,Goldreich04book}. This has several advantages in generality and efficiency, see~\cite{IshaiPrSa08} and Section~\ref{sec-discuss} below for discussion.

\superparagraph{Ruling out the obvious.} An obvious approach for securely realizing an arithmetic computation $C^R$ is by first designing an equivalent {\em boolean} circuit $C'$ which computes the same function on a binary representation of the inputs, and then using standard MPC protocols for realizing $C'$. The main disadvantage of such an approach is that it typically becomes very inefficient when $R$ is large.
One way to rule out such an approach, at least given the current state of the art, is to require the communication complexity to grow at most linearly with $\log|R|$. (Note that even in the case of finite fields with $n$-bit elements, the size of the best known boolean multiplication circuits is $\omega(n\log n)$; the situation is significantly worse for other useful rings, such as matrix rings.)

A cleaner way for ruling out such an approach, which is of independent theoretical interest, is by restricting protocols to only make a {\em black-box} access to the ring $R$.
That is, $\Pi$ securely realizes $C$ if $\Pi^R$ securely realizes $C^R$ for every finite ring $R$ and {\em every representation of elements in $R$}.\footnote{When considering computational security we will require representations to be {\em computationally efficient}, in the sense that given identifiers of two ring elements $a,b$ one can efficiently compute the identifiers of $a+b$, $a-b$, and $a\cdot b$.}
This black-box access to $R$ enables $\Pi$ to perform ring operations and sample random ring elements, but the correspondence between ring elements and their identifiers (or even the exact size of the ring) will be unknown to the protocol.\footnote{Note that it is not known how to efficiently learn the structure of a ring using a black box access to ring operations, even in the special case of general finite fields~\cite{BonehL96,MaurerR07}.}
When considering the special case of fields, we allow by default the protocol $\Pi$ to access an inversion oracle.

\subsection{Previous Work}
\label{sec-related}

In the setting of MPC {\em with honest majority}, most protocols from the literature can make a black-box use of an arbitrary {\em field}.
An extension to arbitrary black-box rings was given in~\cite{CramerFIK03}, building on previous black-box secret sharing techniques of~\cite{DesmedtF89,CramerF02}.

In the case of secure two-party computation and MPC with no honest majority, most protocols from the literature apply to boolean circuits. Below we survey some previous approaches from the literature that apply to secure arithmetic computation with no honest majority.

In the semi-honest model, it is easy to employ any homomorphic encryption scheme with plaintext group $\ZZ_m$ for performing arithmetic MPC over $\ZZ_m$. (See, e.g.,~\cite{AbadiF90,CanettiIKRRW01}.) An alternative approach, which relies on oblivious transfer and uses the standard binary representation of elements in $\ZZ_m$, was employed in~\cite{Gilboa99}. These protocols make a black-box use of the underlying cryptographic primitives but do not make a black-box use of the underlying ring. Applying the general compilers of~\cite{GoldreichMiWi87,CanettiLiOsSa02} to these protocols in order to obtain security in the malicious model would result in inefficient protocols which make a non-black-box use of the underlying cryptographic primitives (let alone the ring).

In the {\em malicious model}, protocols for secure arithmetic computation based on {\em threshold} homomorphic encryption were given in~\cite{CramerDN01,DamgardN03}\footnote{While~\cite{CramerDN01,DamgardN03} refer to the case of robust MPC in the presence of an honest majority, these protocols can be easily modified to apply to the case of MPC with no honest majority. We note that while a main goal of these works was to minimize the growth of complexity with the number of parties, we focus on minimizing the complexity in the two-party case.}  (extending a similar protocol for the semi-honest model from~\cite{FranklinH96}).  These protocols provide the most practical general solutions for secure arithmetic two-party computation we are aware of, requiring a constant number of modular exponentiations for each arithmetic gate. On the down side, these protocols require a nontrivial setup of keys which is expensive to distribute. Moreover, similarly to all protocols described so far, they rely on special-purpose zero-knowledge proofs and specific number-theoretic assumptions and thus do not make a black-box use of the underlying cryptographic primitives, let alone a black-box use of the ring.

The only previous approach which makes a black-box use of an underlying ring (as well as a black-box use of OT) was suggested by Naor and Pinkas~\cite{NaorP06} in the context of secure polynomial evaluation. Their protocol can make a black-box use of any {\em field} (assuming an inversion oracle), and its security is related to the conjectured intractability of decoding Reed-Solomon codes with a sufficiently high level of random noise. The protocol from~\cite{NaorP06} can be easily used to obtain general secure protocols for arithmetic circuits in the {\em semi-honest}
model. However, extending it to allow full simulation-based security in the malicious model (while still making only a black-box use of the underlying field) is not straightforward. (Even in the special case of secure polynomial evaluation, an extension to the malicious model suggested in~\cite{NaorP06} only considers {\em privacy} rather than full simulation-based security.)

Finally, we note that Yao's garbled circuit technique~\cite{Yao86}, which is essentially the only known technique for constant-round secure computation of general functionalities, does not have a known arithmetic analogue. Thus, in all general-purpose protocols for secure arithmetic computation (including the ones presented in this work) the round complexity must grow with the multiplicative depth\footnote{The {\em multiplicative depth} of a circuit is the maximal number of multiplication gates on a path from an input to an output.}
of $C$.

\subsection{Our Contribution}
\label{sec-contribution}

We study the complexity of general secure arithmetic computation over finite rings in the presence of an arbitrary number of malicious parties. We are motivated by the following two related goals.
\begin{itemize}
  \item {\em Black-box feasibility}: only make a black-box use of an underlying ring $R$ or field $F$ and standard cryptographic primitives;
  \item {\em Efficiency}: minimize the number of such black-box calls, as well as the communication overhead.
\end{itemize}

For simplicity, we do not attempt to optimize the dependence of the complexity on the number of parties, and restrict the following discussion to the two-party case.

We present several solutions which differ in their efficiency, generality, and underlying intractability assumptions. Below we describe the main protocols along with their efficiency and security features. An overview of the underlying techniques is presented in Section~\ref{sec-techniques}.

\superparagraph{An unconditionally secure protocol.} We present an {\em unconditionally secure} protocol in the OT-hybrid model which makes a {\em black-box} use of an {\em arbitrary} finite ring $R$, but where the number of ring operations and the number of ring elements being communicated grow linearly with (an upper bound on) $\log|R|$.  (We assume for simplicity that an upper bound on $\log|R|$ is given by the ring oracle, though such an upper bound can always be inferred from the length of the strings representing ring elements.)
More concretely, the number of ring operations for each gate of $C$ is $\poly(k)\cdot\log|R|$, where $k$ is a statistical security parameter. This gives a two-party analogue for the MPC protocol over black-box rings from~\cite{CramerFIK03}, which requires an honest majority (but does not require the number of ring operations to grow with $\log |R|$).

\superparagraph{Protocols based on noisy linear encodings.}
Motivated by the goal of reducing the overhead of the previous protocol,
we present a general approach for deriving secure arithmetic computation protocols over a ring $R$ from linear codes over $R$.
The (computational) security of the protocols relies on intractability assumptions related to the hardness of decoding in the presence of random noise. These protocols generalize and extend in several ways the previous approach of Naor and Pinkas for secure polynomial evaluation~\cite{NaorP06} (see Section~\ref{sec-techniques} for discussion). Using this approach, we obtain the following types of protocols in the OT-hybrid model.
\begin{itemize}
  \item A protocol which makes a black-box use of an arbitrary {\em field} $F$, in which the number of field operations (and field elements being communicated) does not grow with the field size. More concretely, the number of field operations for each gate of $C$ is bounded by a fixed polynomial in the security parameter $k$, independently of $|F|$.
The underlying assumption is related to the conjectured intractability of decoding a random linear code\footnote{ The above efficiency feature requires that random linear codes remain hard to decode even over very large fields. Note, however, that $\log |F|$ is effectively restricted by the running time of the adversary, which is (an arbitrarily large) polynomial in $k$. The assumption can be relaxed if one allows the number of ring operation to moderately grow with $\log|F|$.}  over $F$. Our assumption is implied by the assumption that a noisy codeword in a random linear code over $F$ is pseudorandom.
Such a pseudorandomness assumption follows from the average-case hardness of {\em decoding} a random linear code when the field size is polynomial in $k$ (see~\cite{BlumFKL93,ApplebaumIK07} for corresponding reductions in the binary case).
  \item A variant of the previous protocol which makes a black-box use of an arbitrary {\em ring} $R$, and in particular does not rely on inversion. This variant is based on families of linear codes over rings in which decoding in the presence of erasures can be done efficiently, and for which decoding in the presence of (a suitable distribution of) random noise seems intractable.
  \item The most efficient protocol we present relies on the intractability of decoding Reed-Solomon codes with a (small) constant rate in the presence of a (large) constant fraction of noise.\footnote{The precise intractability assumption we use is similar in flavor to an assumption used in~\cite{NaorP06} for evaluating polynomials of degree $d\ge 2$. With a suitable choice of parameters, our assumption is implied by a natural pseudorandomness variant of the assumption from~\cite{NaorP06}, discussed in~\cite{KiayiasY08}. The assumption does not seem to be affected by the recent progress on list-decoding Reed-Solomon codes and their variants~\cite{GuruswamiS99,CoppersmithS03,BleichenbacherKY07,ParvareshVa05}.}
      The amortized communication cost is a constant number of field elements per multiplication gate. (Here and in the following, when we refer to ``amortized'' complexity we ignore an additive term that may depend polynomially on the security parameter and the circuit depth, but not on the circuit size. In most natural instances of large circuits this additive term does not form an efficiency bottleneck.)

      A careful implementation yields protocols whose amortized computational cost is $O(\log k)$ field operations per gate, where $k$ is a security parameter, assuming that the field size is super-polynomial in $k$.
      In contrast, protocols which are based on homomorphic encryption schemes (such as~\cite{CramerDN01} or the ones obtained in this work) apply modular exponentiations, which require $\Omega(k+\log|F|)$ ring multiplications per gate, in a ciphertext ring which is larger than $F$. This is the case even in the semi-honest model. Compared to the ``constant-overhead'' protocol from~\cite{IshaiKOS08} (applied to a boolean circuit realizing $C^F$), our protocol has better communication complexity and relies on a better studied assumption, but its asymptotic computational complexity is worse by an $O(\log k)$ factor when implemented in the {\em boolean} circuit model.
\end{itemize}

\superparagraph{Protocols making a black-box use of homomorphic encryption.}
For the case of rings of the form $\mathbb{Z}_m=\mathbb{Z}/m\mathbb{Z}$ (with the standard representation) we present a protocol which makes a black-box use of any homomorphic encryption scheme with plaintext group $\ZZ_m$. Alternatively, the protocol can make a black-box use of homomorphic encryption schemes in which the plaintext group is determined by the key generation algorithm, such as those of Paillier~\cite{Paillier99} or Damg{\aa}rd-Jurik~\cite{DamgardJ02}. In both variants of the protocol, the (amortized) number of communicated ciphertexts and calls to the encryption scheme for each gate of $C$ is constant, assuming that $m$ is prime.
This efficiency feature is comparable to the protocols from~\cite{CramerDN01,DamgardN03} discussed in Section~\ref{sec-related} above. Our protocols have the advantages of using a more general primitive and only making a {\em black-box} use of this primitive (rather than relying on special-purpose zero-knowledge protocols). Furthermore, the additive term which we ignore in the above ``amortized'' complexity measure seems to be considerably smaller than the cost of distributing the setup of the threshold cryptosystem required by~\cite{CramerDN01}. 

Both variants of the protocol can be naturally extended to the case of matrix rings $\mathbb{Z}^{n\times n}_m$, increasing the communication complexity by a factor of $n^2$. (Note that emulating matrix operations via basic arithmetic operations over $\ZZ_m$ would result in a bigger overhead, corresponding to the complexity of matrix multiplication.) Building on the techniques from~\cite{MohasselW08}, this protocol can be used to obtain efficient protocols for secure linear algebra which make a black-box use of homomorphic encryption and achieve simulation-based security against malicious parties (improving over similar protocols with security against {\em covert} adversaries~\cite{AumannLi07} recently presented in~\cite{MohasselW08}).

\medskip

All of our protocols are in fact {\em UC-secure} in the OT-hybrid model and can be generalized to {\em multiparty} computation with an arbitrary number of malicious parties. The security of the protocols also holds against {\em adaptive} adversaries, assuming that honest parties may erase data. (This is weaker than the standard notion of adaptive security~\cite{CanettiFGN96} which does not rely on data erasure.) The {\em round complexity} of all the protocols is a constant multiple of the multiplicative depth of $C$.

\subsection{Techniques}
\label{sec-techniques}

Our results build on a recent technique from~\cite{IshaiPrSa08} (which was inspired by previous ideas from~\cite{IshaiKuOsSa07} and also \cite{HarnikIsKuNi08}).
The main result of~\cite{IshaiPrSa08} constructs a secure two-party protocol for a functionality $f$ in the OT-hybrid model by making a {\em black-box} use of the following two ingredients: (1) an {\em outer MPC protocol} which realizes $f$ using $k$ additional ``servers'', but only needs to tolerate a {\em constant fraction} of {\em malicious} servers; and (2) an {\em inner two-party protocol} which realizes in the {\em semi-honest OT-hybrid model} a reactive two-party functionality defined (in a black-box way) by the outer protocol. The latter functionality is essentially a distributed version of the algorithm run by the servers in the outer protocol.

Because of the black-box nature of this construction, if both ingredients make a black-box use of $R$ and/or a black-box use of cryptographic primitives, then so does the final two-party protocol.

Given the above, it remains to find good instantiations for the outer and inner protocols. Fortunately, good instances of the outer protocol already exist in the literature. In the case of general black-box rings, we can use the protocol of~\cite{CramerFIK03}. In the case of fields, we can use a variant of the protocol from~\cite{DamgardIs06} for better efficiency. This protocol has an amortized
communication cost of a constant number of field elements for each multiplication gate in the circuit.  In terms of computational overhead, a careful implementation incurs an amortized overhead of $O(\log k)$ field operations per gate, where $k$ is a security parameter, assuming that the field size is superpolynomial in $k$.  (The overhead is dominated by the cost of Reed-Solomon encoding over the field.)

Our final protocols are obtained by combining the above outer protocols with suitable implementations of the inner protocol. Our main technical contribution is in suggesting concrete inner protocols which yield the required security and efficiency features.

Similarly to~\cite{IshaiPrSa08}, the inner protocols corresponding to the outer protocols we employ require to securely compute, in the semi-honest model, multiple instances
of a simple ``product-sharing'' functionality, in which Alice holds a ring element $a$, Bob holds a ring element $b$, and the output is an additive secret sharing of $ab$. (The efficient version of the outer protocol requires the inner protocol to perform only a constant amortized number of product-sharings per multiplication gate. All other computations, including ones needed for handling addition gates, are done locally and do not require interaction.) In~\cite{IshaiPrSa08} such a product-sharing protocol is implemented by applying the GMW protocol~\cite{GoldreichMiWi87} (in the semi-honest OT-hybrid model) to the binary representation of the inputs. This does not meet our current feasibility and efficiency goals.

Below we sketch the main ideas behind different product-sharing protocols on which we rely, which correspond to the main protocols described in Section~\ref{sec-contribution}.

\superparagraph{Unconditionally secure product-sharing.}  In our unconditionally secure protocol, Bob breaks his input $b$ into $n$ additive shares and uses them to generate $n$ pairs of ring elements, where in each pair one element is a share of $b$ and the other is a random ring element. (The location of the share of $b$ in each pair is picked at random and is kept secret by Alice. Note that additive secret-sharing can be done using a black-box access to the ring oracle.) Bob  sends these $n$ pairs to Alice. Alice multiplies each of the $2n$ ring elements (from the left) by her input $a$, and subtracts from each element in the $i$-th pair a random ring element $t_i$. This results in $n$ new pairs. Bob retrieves from each pair the element corresponding to the original additive share of $b$ by using $n$ invocations of the OT oracle. Bob outputs the sum of the $n$ ring elements she obtained, and Alice outputs $\sum_{i=1}^n t_i$.

It is easy to verify that the protocol has the correct output distribution.  The security of the protocol can be analyzed using the Leftover Hash Lemma~\cite{ImpagliazzoLL89}. (Similar uses of this lemma were previously made
in~\cite{ImpagliazzoNaor96,IshaiKOS06}.) Specifically, the protocol is statistically secure when $n>\log_2|R|+k$. We note that in light of efficient algorithms for low-density instances of subset sum~\cite{LagariasO85}, one cannot hope to obtain significant efficiency improvements by choosing a smaller value of $n$ and settling for computational security.

\superparagraph{Product-sharing from linear codes.} Our construction for black-box fields generalizes the previous approach of Naor and Pinkas~\cite{NaorP06} in a natural way. The high level idea is as follows. Bob sends to Alice a {\em noisy} randomized linear encoding (or noisy linear secret-sharing) of $b$ which is assumed to hide $b$. Alice uses the homomorphic properties of this encoding to compute a noisy encoding of $ab+z$ for a random $z$ of her choice. Bob uses OT to retrieve only the non-noisy portions of the latter encoding. Note that the above unconditionally secure protocol can also be viewed as an instance of this general paradigm.

In more detail, suppose that $G$ is an $n\times k$ generating matrix of a linear code ${\cal C}\subset F^n$ whose minimal distance is bigger than $d$. This implies that an encoded message can be efficiently recovered from any $n-d$ coordinates of the encoding by solving a system of linear equations defined by the corresponding sub-matrix of $G$. Now, suppose that $G$ has the following intractability property: the distribution of $Gu+e$, where $u$ is a random message from $F^k$ whose first coordinate is $b$ and $e$ is a random noise vector of Hamming weight at most $d$, keeps $x$ semantically secure. (This follows, for instance, from the pseudorandomness of a noisy codeword in the code spanned by all but the first column of $G$.) Given such $G$ the protocol proceeds as follows. Bob sends to Alice $v=Gu+e$ as above, where $e$ is generated by first picking at random a subset $L\subset [n]$ of size $n-d$ and then picking $e_i$ at random for $i\not\in L$ and setting $e_i=0$ for $i\in L$. By assumption, $v$ keeps $b$ hidden from Alice. Alice now locally computes $v'=a\cdot v-Gz$, where $z$ is a random message in $F^k$. Restricted to the coordinates in $L$, this agrees with the encoding of a {\em random} message whose first coordinate is $ab-z_1$. Using the OT-oracle, Bob obtains from Alice only the coordinates of $v'$ with indices in $L$, from which it can decode and output $ab-z_1$. Alice outputs $z_1$.

The basic secure polynomial evaluation protocol from~\cite{NaorP06}, when restricted to degree-1 polynomials, essentially coincides with the above protocol when $\cal C$ is a Reed-Solomon code. The extension to general linear codes makes the underlying security assumption more conservative. Indeed, in contrast to Reed-Solomon codes, the problem of decoding {\em random} linear codes is believed to be intractable even for very low levels of noise.

In our actual protocols we will use several different distributions for picking the generating matrix $G$, and allow the noise distribution to depend on the particular choice of $G$ (rather than only on its minimal distance). In particular, for the case of general black-box rings we pick $G$ from a special class of codes for which decoding does not require inversion and yet the corresponding intractability assumption still seems viable.

Finally, in our most efficient code-based protocol we use Reed-Solomon codes as in~\cite{NaorP06}, but extend the above general template by letting Bob pack $t=\Omega(k)$ field elements $(b_1,\ldots,b_t)$ into the same codeword $v$. This variant of the construction does not apply to a general $G$, and relies on a special property of Reed-Solomon codes which was previously exploited in~\cite{FranklinYu92}.
This approach yields a protocol which realizes $t$ parallel instances of product-sharing by communicating only $O(t)$ field elements.

\superparagraph{Product-sharing from homomorphic encryption.}
Our last product-sharing protocol applies to rings of the form $\ZZ_m$ or $n\times n$ matrices over such rings and makes a standard use of homomorphic encryption. The only technicality that needs to be addressed is that the most useful homomorphic homomorphic encryption schemes do not allow to control the modulus $m$ but rather have this modulus generated by the key-generation algorithm. However, in the semi-honest model it is simple (via standard techniques) to emulate secure computation modulo $m$ via secure computation modulo any $M\gg m$.

\subsection{Further Discussion}
\label{sec-discuss}

\superparagraph{From the OT-hybrid model to the plain model}
An advantage of presenting our protocols in the OT-hybrid model is that they can be instantiated in a variety of models and under a variety of assumptions. For instance, using UC-secure OT protocols from~\cite{PeikertVaWa08,DamgardOT}, one can obtain efficient UC-secure instances of our protocols in the CRS model.
In the stand-alone model, one can implement these OTs by making a black-box use of homomorphic encryption~\cite{IshaiKLP06}. Thus, our protocols which make a black-box use of homomorphic encryption do not need to employ an additional OT primitive in the stand-alone model.

We finally note that our protocols requires only $O(k)$ OTs with security in the malicious model, independently of the circuit size; the remaining OT invocations can all be implemented in the semi-honest model, which can be done very efficiently using the technique of~\cite{IshaiKiNiPe03}. Furthermore, all the ``cryptographic'' work for implementing the OTs can be done off-line, before any inputs are available. We expect that in most natural instances of large-scale secure arithmetic computation, the cost of realizing the OTs will not form an efficiency bottleneck.

\superparagraph{Extensions.}
While we explicitly consider here only stateless arithmetic circuits, this model (as well as our results) can be readily generalized to allow stateful, reactive arithmetic computations whose secret state evolves by interacting with the parties.\footnote{An ideal functionality which formally captures such general reactive arithmetic computations was defined in~\cite{DamgardN03} (see also~\cite[Chapter~4]{Toft07thesis}) and referred to as an {\em arithmetic black-box} (ABB). All of our protocols for arithmetic circuits can be naturally extended to realize the ABB functionality.}

Another direction for extending the basic results has to do with the richness of the arithmetic computation model. Recall that the standard model of arithmetic circuits allows only to add, subtract, and multiply ring elements. While this provides a clean framework for the study of secure computation over black-box rings, many applications depend on other operations that cannot be efficiently expressed in this basic circuit model. For instance, when emulating arithmetic computation over the integers via computation over a (sufficiently large) finite field, one typically needs to check that the inputs comes from a given range.

As it turns out, reactive arithmetic computations are surprisingly powerful in this context, and can be used to obtain efficient secure realizations of useful ``non-arithmetic'' manipulations of the state, including decomposing a ring element into its bit-representation, equality testing, inversion, comparison, exponentiation, and others~\cite{DamgardFKNT06,Toft07thesis}.
These reductions enhance the power of the basic arithmetic model, and allow protocols to efficiently switch from one representation to another in computations that involve both boolean and arithmetic operations.

\subsection{Roadmap} We now briefly outline the
structure of the rest of this paper. Our basic definitions,
including those of black-box computational rings and our notion of
security in this context, are given in Section 2.  To achieve our
results (focusing on the two-party setting), recall that our overall
technical approach is to invoke~\cite{IshaiPrSa08}, which gives a
general blueprint for constructing efficient protocols by combining
an ``outer MPC protocol'' secure against active adversaries in the
honest majority setting, with an ``inner two-party protocol'' for
simple functionalities that need only be secure against
\emph{passive} adversaries.  We will give the details of this in
Section 5, but the bottom line (as discussed above) is that existing protocols (some with minor modifications) suffice for the outer MPC protocols, and all we
need to provide are efficient inner protocols secure against passive
adversaries.  Furthermore, since we are in the setting of passive
adversaries, the only functionality that we need the inner protocol
to compute is a basic ring multiplication function, at the end of
which the two parties should hold additive shares of the product of
their respective inputs.  To construct efficient protocols for this
basic functionality, we examine three approaches.  Our first two
approaches are based on ``noisy encodings" of various types, which
we define in Section 3, and the last approach is based on
homomorphic encryption.  The actual protocols (``inner two-party
protocols'') based on these three approaches are given in Section 4.

\section{Preliminaries}

\paragraph{Black-box rings and fields.}
A probabilistic oracle
\ringalgo
is said to be a
valid implementation of a finite ring \ring if it behaves as follows: it
takes as input one of the commands \add, \subtract, \mult, \sample and two
$m$ bit ``element identifiers'' (or none, in the case of \sample), and
returns a single $m$ bit string.
There is a one-to-one mapping
$\rep:\ring\hookrightarrow\{0,1\}^m$ such that for all $x,y\in\ring$
$\ringalgo(\operation,\rep(x),\rep(y))=\rep(x *_\ring y)$ where \operation
is one of \add, \subtract and \mult and $*_\ring$ is the ring operation
$+,-,$ or $\cdot$ respectively.
When an input is not from the range of $\rep$, the oracle outputs $\bot$. (In a
typical protocol, if a $\bot$ is ever encountered by an honest player, the protocol aborts.)
The output of
$\ringalgo(\sample)$ is $\rep(x)$ where $x$ will be drawn uniformly at random from \ring.
We will be interested in oracles of the kind that implements a {\em
family} of rings, of varying sizes. Such a function should take an
additional input \ringid to indicate which ring it is implementing.

\begin{definition}
A probabilistic oracle
\ringfamily is said to be a {\em concrete
ring family} (or simply a {\em ring family}) if, for all strings \ringid, the
oracle $\ringfamily(\ringid,\cdot)$ (i.e., with first input being fixed to
\ringid), is an implementation of some ring. This concrete ring will be denoted by $\ringfamily_\ringid$.
\end{definition}
Note that so far we have not placed any computability requirement on the oracle;
we only require a concrete mapping from ring elements to binary strings.
However, when considering computationally secure protocols we will
typically restrict the attention to
``efficient'' families of rings: we say \ringfamily is a  {\em
computationally efficient ring family} if it is a ring family that can be implemented by a
probabilistic polynomial time algorithm.

There are some special cases that we shall refer to:
\begin{enumerate}
\item Suppose that for all \ringid, we have that $\ringfamily_\ringid$ is a
ring with an identity for multiplication, 1.  Then, we call \ringfamily a
{\em ring family with inverse} if in addition to the other
operations, $\ringfamily(\ringid, \one)$ returns $\rep_\ringid(1)$ and
$\ringfamily(\ringid,\invert,\rep_\ringid(x))$ returns
$\rep_\ringid(x^{-1})$ if $x$ is a unit (i.e., has a unique left- and
right-inverse) and $\bot$ otherwise.
\item If \ringfamily is a ring family with inverse such that
for all \ringid the ring $\ringfamily_\ringid$ is a field, then we
say that \ringfamily is a {\em field family}.
\item We call a ring family with inverse \ringfamily a {\em
pseudo-field family}, if for all \ringid, all but negligible (in $|\ringid|$)
fraction of the elements in the ring $\ringfamily_\ringid$ are units.
\end{enumerate}

Some special families of rings we will be interested in, other than finite
fields, include rings of the form $\mathbb{Z}_m=\mathbb{Z}/m\mathbb{Z}$ for
a composite integer $m$ (namely, the ring of residue classes modulo $m$),
and rings of matrices over a finite field or ring. With an appropriate choice of
parameters, both of these families are in fact pseudo-fields. Note that a
concrete ring family \ringfamily for the rings of the form $\mathbb{Z}_m$
could use the binary representation of $m$ as the input \ringid; further the
elements in $\mathbb{Z}_m$ could be represented as $\lceil \log m
\rceil$-bit strings in a natural way. Of course, a different concrete ring
family for the same ring can use a different representation.

Finally, for notational convenience we assume that the length of all element
identifiers in $\ringfamily_\ringid$ is exactly $|\ringid|$. In particular,
the ring $\ringfamily_\ringid$ has at most $2^{|\ringid|}$ elements.

\paragraph{Arithmetic circuits.} An arithmetic circuit is a circuit (i.e., a
directed acyclic graph with the nodes labeled as input gates, output gates
or internal gates), in which the internal gates are labeled with a ring
operation: \add, \subtract or \mult.  (In addition, for fields, one often
considers the additional constant gate \one.) An arithmetic circuit \ckt can
be instantiated with any ring \ring. We denote by $\ckt^\ring$ the mapping (from a vector of ring elements to a vector of ring elements) defined in a natural way by instantiating \ckt with \ring. For a concrete ring family $\ringfamily$, we denote by $\ckt^\ringfamily$ the mapping which takes an \ringid and a vector of input identifiers and outputs the corresponding vector of output identifiers. (If any of the inputs is not a valid identifier, $\ckt^\ringfamily$ outputs $\bot$.)

In the context of multi-party computation, each input or output to such a
circuit is annotated to indicate which party (or parties) it ``belongs'' to.
Given such an annotated circuit \ckt and a concrete ring family \ringfamily,
we define the functionality \fckt\ringfamily to behave as follows:
\begin{itemize}
\item The functionality takes \ringid as a common (public) input, and
receives (private) inputs to \ckt from each party. It then
evaluates the function $\ckt^{\ringfamily}(\ringid,\text{inputs})$ using access to
\ringfamily, and provides the outputs to the parties.%
\footnote{\fckt{\ringfamily} can take \ringid as input from each party, and
ensure that all the parties agree on the same \ringid. Alternately, we can
restrict to environments which provide the same common input \ringid to all
parties.  In this case \ringid could be considered part of the specification of the
functionality, more appropriately written as \fcktid\ringfamily.
}
\end{itemize}

\paragraph{Protocols securely realizing arithmetic computations.} We follow
the standard UC-security framework \cite{Canetti05UC}. Informally, a
protocol \prot is said to securely realize a functionality \Ffunc if
there exists a PPT simulator \simx, such that
for all
(non-uniform PPT) adversaries \adv, and all (non-uniform PPT) environments \env which
interact with a set of parties and an adversary, the following two
scenarios are indistinguishable: the \real interaction where the parties run
the protocol \prot and the adversary is \adv; the \ideal interaction where
the parties communicate directly with the ideal functionality \Ffunc and the
adversary is $\simx^\adv$. Indistinguishability can either be statistical (in the case of unconditional security) or computational (in the case of computational security).
All parties, the adversary, the simulator, the
environment and the functionality get the security parameter $k$ as implicit
input.  Polynomial time computation, computational or statistical
indistinguishability and non-uniformity are defined with respect to this security parameter
$k$. However, since we don't impose an a-priori bound on the size of the inputs received from the environment, the running time of honest parties is bounded by a fixed polynomial in the total length of their inputs (rather than a fixed polynomial in $k$).

We distinguish between {\em static} corruption and {\em adaptive} corruption. In the latter case it
also makes a difference whether the protocols can erase their state (so that
a subsequent corruption will not have access to the erased information), or
no erasure is allowed. Our final protocols will have security against
adaptive\footnote{
One of the 
reasons for us to aim for adaptive security with erasure is that we will be relying on the main protocol compiler of~\cite{IshaiPrSa08}, as described informally in the Introduction and treated more formally in Section~\ref{sec:general}.  This compiler requires that the component protocols, the ``outer MPC protocol'' and the ``inner two-party protocol,'' both enjoy adaptive security -- the outer protocol must be adaptively secure in the model without erasures, but the inner protocol can be adaptively secure with erasures (in the OT-hybrid model).  Note that the conference version of~\cite{IshaiPrSa08} incorrectly claimed that the main protocol's proof of security works even when the inner protocol is only statically secure, but this does not seem to be the case.  However, this issue does not present any problems for us here, as we are easily able to modify our proposed ``inner'' protocols to achieve adaptive security with erasures using standard techniques, as detailed in Appendix~\ref{app:adaptive}.
}
corruption in the model that allows honest parties to erase their state information, but as an intermediate step, we will consider protocols which have security only against static corruption.

We shall consider protocols which make oracle access to a ring family \ringfamily.
For such a protocol we define its {\em arithmetic computation complexity} as
the number of oracle calls to \ringfamily. Similarly the {\em arithmetic
communication complexity} is defined as the number of ring-element labels in
the communication transcript. The arithmetic
computation (respectively communication) complexity of our protocols
will dominate the other computation steps in the protocol
execution (respectively, the number of other bits in the transcript).
Thus, the arithmetic complexity gives a good measure of efficiency for our protocols.

Note that while any computational implementation of the ring oracle necessarily requires the complexity to grow with the ring size, it is possible that the arithmetic complexity does not depend on the size of the ring at all.

We now define our main notion of secure arithmetic computation.
\begin{definition}
Let \ckt be an arithmetic circuit. A protocol \prot is said to be a {\em secure black-box realization of} \ckt-evaluation for a given set of ring
families if, for each \ringfamily in the set,
\begin{enumerate}
\item
$\prot^{\ringfamily}$ securely realizes $\fckt{\ringfamily}$,
and
\item the arithmetic (communication and computation) complexity of $\prot^\ringfamily$
is bounded by
some fixed polynomial in $k$ and $|\ringid|$
(independently of \ringfamily).
\end{enumerate}
\end{definition}
In the case of unconditional security we will quantify over the set of {\em all} ring families, whereas in the case of computational security we will typically quantify only over computationally efficient rings or fields.\footnote{This is needed only in the constructions which rely on concrete computational assumptions. A computationally-unbounded ring oracle can be used by the adversary to break the underlying assumption. }
In both cases, the efficiency requirement on \prot rules out the option of using a brute-force approach to emulate the ring oracle by a boolean circuit.

We remark that our constructions will achieve a stronger notion of 
security, as the simulator used to establish the security in
item (1) above will not depend on \ringfamily.
A bit more precisely, the stronger definition is quantified as follows: there exists a simulator such that for all adversaries, ring families, and environments, the ideal process and the real process are indistinguishable.
 For simplicity however we
phrase our definition as above which does allow different simulators for
different \ringfamily.

\section{Noisy Encodings}
\label{sec:noisyenc}

A central tool for our main protocols is a noisy encoding of elements in a
ring or a field.  In general this encoding consists of encoding a
randomly padded message with a (possibly randomly chosen) linear code, and
adding noise to the codeword obtained. The encodings will be such that, with
some information regarding the noise, decoding (of a codeword derived from
the noisy codeword) is possible, but otherwise the noisy codeword hides the
message. The latter will typically be a computational assumption, for
parameters of interest to us.

We shall use two kinds of encodings for our basic protocols in
\sectionref{passive}. The first of these encodings has a statistical hiding
property which leads to a statistically secure protocol (in the \OT-hybrid
model). The other kind of encoding we
use (described in \sectionref{lincode}) is hiding only under computational assumptions. In fact, we provide a
general template for such encodings and instantiate it variously, leading to different concrete computational assumptions. 

\subsection{A Statistically Hiding Noisy Encoding}
\label{sec:stathiding}

\begin{itemize}
\item {\bf Encoding of $x$, $\statencode(\ringid,x)$}: Here $x\in\ringfamily_\ringid$; $n$ is
a parameter of the encoding.
	\begin{itemize}
	\item Denote $\ringfamily_\ringid$ by \ring.
	\item Pick a ``pattern'' $\sigma\in\{0,1\}^n$.
	\item Pick  a random vector $u\in\ring^n$ conditioned on $\sum_{i=1}^n u_i = x$.
	\item Pick a pair of random vectors $(v^0,v^1)\in\ring^n\times\ring^n$,
	conditioned on	$v^{\sigma_i}_i = u_i$. That is, the vector $u$ is
	``hidden'' in the pair of vectors $v^0$ and $v^1$ according to the pattern
	$\sigma$.
	\item Output $(v^0,v^1,\sigma)$.
	\end{itemize}
\end{itemize}

The encoding could be seen as consisting of two parts $(v^0,v^1)$ and $\sigma$, where
the latter is information that will allow one to decode this code.
For $x\in\ring$, let $\scode_x^{\ring,n}$ denote the distribution of
the first part of the encoding $\statencode(\ringid,x)$, namely $(v^0,v^1)$.

This simple encoding has the useful property that it statistically
hides $x$ when the decoding information $\sigma$ is removed. The proof of
this fact makes use of the Leftover Hash Lemma~\cite{ImpagliazzoLL89} (similarly to previous uses of this lemma
in~\cite{ImpagliazzoNaor96,IshaiKOS06}).

\begin{lemma}
\label{lem:stathiding}
Let $n>\log|\ring| + k$. Then, for all $x\in\ring$, the statistical distance between
the distribution of $\scode_x^{\ring,n}$ and the uniform distribution over
$\ring^n\times\ring^n$ is $2^{-\Omega(k)}$.
\end{lemma}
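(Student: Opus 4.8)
The plan is to recognize $\scode_x^{\ring,n}$ as the output of a universal hash family applied to a source with enough min-entropy, and then invoke the Leftover Hash Lemma. The key observation is the following reformulation of the sampling procedure. Sampling $(v^0,v^1)$ is equivalent to: pick $\sigma\in\{0,1\}^n$ uniformly; pick the ``unused'' coordinates $v^{1-\sigma_i}_i$ independently and uniformly from $\ring$; pick $u\in\ring^n$ uniformly subject to $\sum_i u_i = x$; and set $v^{\sigma_i}_i = u_i$. The pair $(v^0,v^1)$ is thus determined by the tuple $(\sigma, w, u)$ where $w\in\ring^n$ collects the unused coordinates — but this tuple has $u$ constrained to an affine subspace of size $|\ring|^{n-1}$, so it is genuinely lossy, and that loss is exactly what we must show is statistically invisible.

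First, I would set up the hash family. For a fixed target pair $(v^0,v^1)\in\ring^n\times\ring^n$, define a random variable over the choice of $\sigma$: think of $\sigma$ as the seed of a hash function $h_\sigma$ that acts on a source $Z$. The cleanest route: let the source be $Z = (u, w)$ where $u\in\ring^n$ is uniform subject to $\sum u_i = x$ and $w = (w_1,\dots,w_n)\in\ring^n$ is uniform and independent — so $Z$ is uniform on a set of size $|\ring|^{2n-1}$, i.e. has min-entropy $(2n-1)\log|\ring|$. Now define $h_\sigma(u,w) = (v^0,v^1)$ by $v^{\sigma_i}_i = u_i$ and $v^{1-\sigma_i}_i = w_i$. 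I claim the family $\{h_\sigma\}_{\sigma\in\{0,1\}^n}$ is (close to) pairwise independent, or more precisely: for $(u,w)\neq(u',w')$, the probability over $\sigma$ that $h_\sigma(u,w) = h_\sigma(u',w')$ is small. This is where one checks the combinatorics: $h_\sigma(u,w)=h_\sigma(u',w')$ forces, for each coordinate $i$, that $u_i=u'_i$ (when $\sigma_i$ picks the $u$-slot on that side) together with matching $w$-entries — working out the casework shows the collision probability decays like $2^{-n'}$ where $n'$ is the number of coordinates where the two source strings differ, and since they differ somewhere, this is at most $1/2$, with the averaged collision probability over the whole source being roughly $2^{-n}$ plus the uniform term. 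Actually, the slicker packaging is to observe that $(v^0,v^1)$ as a function of the random $\sigma$ and the random unused-coordinate vector is, for each fixed $u$, a bijection onto a translate, so I would instead directly bound the collision probability of the distribution $\scode_x^{\ring,n}$ and compare to $1/|\ring|^{2n}$.

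Concretely, the cleanest execution computes the \emph{collision probability} of $\scode_x^{\ring,n}$ directly: draw two independent samples and compute the probability they coincide, then apply the standard fact that a distribution whose collision probability exceeds that of the uniform distribution on a set of size $N$ by a factor $(1+\delta)$ has statistical distance at most $\frac12\sqrt\delta$ from uniform. The collision probability of $\scode_x^{\ring,n}$ works out (conditioning on whether the two $\sigma$'s agree, and on the values of the $u$-vectors) to at most $\frac{1}{|\ring|^{2n}}(1 + |\ring|^{n+1}2^{-n})$ or similar — the point being the excess factor is $|\ring| \cdot 2^{-(n-\log|\ring|)} \le |\ring|\cdot 2^{-k} = 2^{-\Omega(k)}$ when $n > \log|\ring| + k$. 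Taking square roots gives statistical distance $2^{-\Omega(k)}$, as claimed.

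\textbf{The main obstacle} I anticipate is the collision-probability bookkeeping: one must carefully condition on the event that the two independently sampled patterns $\sigma,\sigma'$ agree in $\ell$ coordinates, and track that in the $\ell$ agreeing coordinates the relevant $\ring$-entries must match (each a $1/|\ring|$ event, but with the subtlety that the $u_i$ are not fully independent because of the constraint $\sum u_i = x$), while in the $n-\ell$ disagreeing coordinates a $u$-entry from one sample must equal a fresh uniform $w$-entry from the other. Getting the constraint-induced dependence right — essentially, that fixing $n-1$ of the $u_i$ determines the last one, costing one factor of $|\ring|$ relative to the fully-independent count — is the delicate step, but it only contributes the single $|\ring|$ factor that the slack $k$ in $n > \log|\ring|+k$ is designed to absorb. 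Everything else is a routine application of the Leftover Hash Lemma in its collision-probability form.
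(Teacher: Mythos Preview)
Your collision-probability route is sound and does prove the lemma, but your arithmetic is off in a way that matters. Carrying out the computation you outline: condition on the two patterns $\sigma,\sigma'$ and set $S=\{i:\sigma_i=\sigma'_i\}$, $|S|=\ell$. The collision constraints on the fresh uniform vectors $w,w'$ contribute a factor $|\ring|^{-(2n-\ell)}$; the remaining constraint $u_i=u'_i$ for all $i\in S$ (with $u,u'$ independent uniform on the affine hyperplane $\sum_i u_i=x$) has probability exactly $|\ring|^{-\ell}$ when $\ell<n$ and $|\ring|^{-(n-1)}$ when $\ell=n$. Summing gives collision probability exactly $|\ring|^{-2n}\bigl(1+2^{-n}(|\ring|-1)\bigr)$, so the excess over uniform is $(|\ring|-1)\cdot 2^{-n}\le 2^{-k}$ --- not $|\ring|\cdot 2^{-(n-\log|\ring|)}=|\ring|^2\cdot 2^{-n}$ as you wrote, and certainly not $|\ring|^{n+1}\cdot 2^{-n}$. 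Your stated bound $|\ring|\cdot 2^{-k}$ is \emph{not} $2^{-\Omega(k)}$, since nothing in the hypothesis bounds $|\ring|$ in terms of $k$; with the correct excess, the standard collision bound gives statistical distance at most $\tfrac12\cdot 2^{-k/2}$, as required.

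The paper takes a genuinely different route. Rather than viewing $\sigma$ as the hash seed and $(v^0,v^1)$ as the output, it reverses the roles: it defines the $2$-universal family $H_{v^0,v^1}:\{0,1\}^n\to\ring$, $\sigma\mapsto\sum_i v_i^{\sigma_i}$, indexed by $(v^0,v^1)$. The Leftover Hash Lemma then directly bounds the average over $(v^0,v^1)$ of the distance of $H_{v^0,v^1}(\sigma)$ from uniform on $\ring$ by $2^{-\Omega(n-\log|\ring|)}$. A short symmetry-plus-Bayes argument (shifting the first coordinates of $v^0,v^1$ by $\alpha$ shifts the hash output by $\alpha$, so all target values $x$ are equivalent) then converts this average into the statistical distance of $\scode_x^{\ring,n}$ from uniform on $\ring^{2n}$. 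Your direct computation avoids that duality step and is more self-contained; the paper's approach avoids the case analysis on $(\sigma,\sigma')$ entirely by hashing into a range of size only $|\ring|$ rather than $|\ring|^{2n}$.
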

\begin{proof}
Consider the hash function family \hash that consists of functions
$H_{v^0,v^1} : \{0,1\}^n \rightarrow \ring$, where
$(v^0,v^1)\in\ring^n\times\ring^n$, defined as $H_{v^0,v^1}(\sigma) :=
\sum_i v^{\sigma_i}_i$.  It is easily verified that this is a 2-universal
hash function family.  Then, by the Leftover Hash Lemma,
\[\frac1{|\hash|} \sum_{H\in\hash} \statdiff(H(\unif_{\{0,1\}^n}),\unif_\ring)
    = 2^{-\Omega(n-\log|\ring|)},\]
where $\unif_{\{0,1\}^n}$ stands for the uniform distribution over $\{0,1\}^n$
and $\statdiff$ denotes the statistical difference between two
distributions.

To prove the lemma we make use also of the following symmetry between all
the possible outcomes of the hash functions: There is a family of
permutations on \hash, $\{\pi_\alpha|\alpha\in\ring\}$ such that for
all $z\in\ring$, $\pr{z|H}=\pr{z+\alpha|\pi_\alpha(H)}$ (where $\pr{z|H}$ is
a shorthand for $\Pr_{\sigma\from\{0,1\}^n}[H(\sigma)=z]$).  In particular
we can set $\pi_\alpha(H_{v^0,v^1}) := H_{u^0,u^1}$ where $u^0$
(respectively $u^1$) is identical to $v^0$ (respectively $v^1$) except for
the first co-ordinate which differs by $\alpha$:
$u^0_1-v^0_1 = u^1_1-v^1_1 = \alpha$.
Then,
\begin{align*}
\frac1{|\hash|} \sum_{H\in\hash} \statdiff(H(\unif_{\{0,1\}^n}),\unif_\ring)
 &=\frac1{|\hash|} \frac12 \sum_{H\in\hash} \sum_{z\in\ring}
     \left( |\pr{z|H}-\frac1{|\ring|}| \right) \\
 &=\frac1{|\hash|} \frac12 \sum_{z\in\ring} \sum_{H\in\hash}
     \left( |\pr{x|\pi_{x-z}(H)}-\frac1{|\ring|}| \right)  \\
 &=\frac{|\ring|}{|\hash|} \frac12 \sum_{H\in\hash}
     \left( |\pr{x|H}-\frac1{|\ring|}| \right)  \;\;\;\text{because $\pi_{x-z}$ is a permutation}\\
 &=\frac12 \sum_{H\in\hash}
     \left( |\pr{H|x}-\frac1{|\hash|}| \right) \;\;\;\text{because with
     $\pr{H}=\frac1{|\hash|}$, $\pr{x}=\frac1{|\ring|}$.}
\end{align*}
Note that the last expression is indeed the statistical difference between
$\scode_x^{\ring,n}$ and $\unif_{\ring^n\times\ring^n}$. To complete the proof note that we
have already bounded the first quantity by $2^{-\Omega(n-\log|\ring|)}$.
\end{proof}

\subsection{Linear Code Based Encodings}
\label{sec:lincode}

We describe an abstract noisy encoding scheme for a ring family \ringfamily.
The encoding scheme is specified using  {\em a code generation algorithm \codegen}:
\begin{itemize}
\item $\codegen$ is a randomized
algorithm such that $\codegen^\ringfamily(\ringid)$ outputs $(G,H,L)$ where
$G$ is an $n\times k$ matrix,  $L\subseteq [n]$, $|L|=\ell$ and $H$ is
another matrix. We note that only $G$ and $L$ will be used in the noisy encoding process; $H$ will be useful
in describing the decoding process. 
\end{itemize}
Here $k$ is the security parameter as well as the code dimension, and $n(k)$ (code length) and $\ell(k)$ (number of coordinates {\em without} noise) are parameters of \codegen. In our instantiations $n$ will be a constant multiple of $k$ and in most cases we will have $\ell=k$.

Let \ringfamily be a ring family and $\ring=\ringfamily_\ringid$ from some \ringid.
Given \codegen, a parameter $t(k)\le k$ (number of ring elements to be encoded, $t=1$ by default), and $x\in\ring^t$, we
define a distribution $\lincode{\codegen}{\ring}t{x}$, as that of the public
output in the following encoding process:
\begin{itemize}
\item {\em Encoding $\encode{\codegen}{\ringfamily}t{\ringid,x}$:}
\begin{itemize}
\item Input: $x=(x_1,\ldots,x_t)\in\ring^t$.
\item Let $(G,L,H)\from \codegen^\ringfamily(\ringid)$
\item Pick a random vector $u\in\ring^k$ conditioned on $u_i = x_i$ for $i=1,\ldots,t$
(i.e., $u$ is $x$ padded with $k-t$ random elements). Compute $Gu \in\ring^n$.
\item Pick a random vector
$v\from\ring^n$, conditioned on $v_i := (Gu)_i$ for $i\in L$.
\item Let the private output be $(G,L,H,v)$ and the public output be $(G,v)$
(where each ring element is represented as a bit string
obtained by the mapping $\rep$ used by \ringfamily).
\end{itemize}
\end{itemize}

The matrix $H$ is not used in the encoding above, but will be required for a
decoding procedure that our protocols will involve. In our
main instantiations $H$ can be readily derived from $G$ and $L$. But we include
$H$ explicitly in the outcome of \codegen, because in some cases it is
possible to obtain efficiency gains if $(G,H,L)$ are sampled together.
We sketch one such case when we describe ``Ring code based encoding''
in \sectionref{encoding-insts}.

\begin{assumption}
\label{asm:lincode-generic}
{\bf (Generic version, for a given \codegen, \ringfamily and $t(k)$.)} For
all sequences $\{(\ringid_k,x_k,y_k)\}_k$, let $\ring_k = \ringfamily_{\ringid_k}$,
and suppose $x_k,y_k \in \ring_k^{t(k)}$. Then the ensembles
$\{\lincode{\codegen}{\ring_k}t{x}\}_k$ and
$\{\lincode{\codegen}{\ring_k}t{y}\}_k$ are computationally
indistinguishable.
\end{assumption}

For the sake of reference to some previously studied assumptions, we also
define a simpler (but stronger) generic assumption, which implies the above version:
\begin{assumption}
\label{asm:lincode-generic-pr}
{\bf (Generic pseudorandomness version, for a given \codegen and \ringfamily.)}
For any sequence $\{\ringid_k\}_k$, let $\ring_k = \ringfamily_{\ringid_k}$.
Then the ensembles
$\{\lincode{\codegen}{\ring_k}t{0^{t(k)}}\}_k$
and
$\{ \left( G \from \codegen^{\ring_k}, v \from \ring_k^n \right) \}_k$
are computationally indistinguishable.
\end{assumption}

\subsubsection{Instantiations of the Encoding}
\label{sec:encoding-insts}
The above generic encoding scheme can be instantiated by specifying a code
generation algorithm \codegen, a ring family, and the parameter $t(k)$ which specifies the
length of the input to be encoded. We consider three such instantiations.

\paragraph{Random code based instantiation.} Our first instantiation of
the generic encoding has
$t(k)=1$ and uses a code generation algorithm \randcodegen based on a random
linear code.  Here the ring family is any field family \fieldfamily.
$\randcodegen^\fieldfamily(\ringid)$ works as follows:
\begin{itemize}
\item Let $k=|\ringid|$. Let $n=2k$ and $\ell=k$. Denote
$\fieldfamily_\ringid$ by \field.
\item Pick a random	$n\times k$ matrix $G\from\field^{n\times k}$.
\item Pick a random subset $L\subseteq [n]$, $|L|=k$, such that the $k\times
k$	submatrix $G|_L$ is non-singular, where $G|_L$ consists of those rows in
$G$ whose indices are in $L$.%
\footnote{For efficiency of \codegen, it is enough to try random subsets
$L\subseteq [n]$ and check if $G|_L$ is non-singular; in the unlikely event
that no $L$ is found in $k$ trials, \codegen can replace $G$ with an
arbitrary matrix with a $k\times k$ identity matrix in the first $k$ rows.}
\item Let $H$ be the $k\times k$ matrix such that $HG|_L = I$, the
$k\times k$ identity matrix. (This $H$ will be used in our protocol
constructions.)
\end{itemize}

The following variants of this instantiation are also interesting:
\begin{itemize}
\item Instead of choosing $n(k)=2k$, we can choose
$n(k)>2k+\log^c|\field|$ for some $c<1$ (say $c=\frac12$). By
choosing a larger $n$ we essentially weaken the required assumption.
(We remark that the case of $n(k)>\log|\field|$ is not of much interest to us here,
because then our construction which employs this assumption is bettered by
our unconditional construction.)
\item The above encoding can be directly used with a pseudo-field family
instead of a field family. Note that the invertibility of elements was used
in deriving $H$, but in a pseudo-field, except with negligible probability
this derivation will still be possible.
\end{itemize}

\paragraph{Ring code based instantiation.}
Our next instantiation also has $t(k)=1$. It uses a code generation
algorithm \ringcodegen that works with any arbitrary ring family (not just
fields). But for simplicity we will assume that the ring has a
multiplicative identity $1$.%
\footnote{Rings which do not have $1$ can be embedded into a ring of double
the size which does have $1$, by including new elements $a+1$
for every element $a$ in the original ring, and setting $1+1=0$.}
Here again in the noisy encoding we will use $t=1$.
$\ringcodegen^\ringfamily(\ringid)$ works follows.
\begin{itemize}
\item Let $k=|\ringid|$. Let $n=2k$ and $\ell=k$. Denote
$\ringfamily_\ringid$ by \ring.
\item Pick two $k\times k$ random matrices
$A$ and $B$ with elements from \ring, conditioned on them being
upper triangular  and having $1$ in the main diagonal.
Let $G$ be the $2k\times k$ matrix
$\left[\begin{smallmatrix}A\\B\end{smallmatrix}\right]$.
\item Define $L$ as follows.
Let $L=\{a_1,\ldots,a_k\}$ where $a_i = i$ or $k+i$ uniformly at random. (That is $a_i$ indices
the $i$-th row in either $A$ or $B$.)
\item Note that $G|_L$ is an upper triangular matrix with 1 in the main
diagonal. It is easy to compute an upper triangular matrix $H$ (also with
1 in the main diagonal) using only the ring operations on elements in
$G|_L$ such that $H G|_L = I$.
\end{itemize}
Here, instead of choosing two matrices, we could choose several, to make the
resulting assumption weaker at the expense of increasing $n$.

We point out an alternate encoding which would also work with arbitrary
rings. 
One can construct $G|_L$ and $H$ such that $HG|_L = I$ simultaneously by
taking a two {\em opposite} random walks in the special linear group
${\mathrm{SL}}(n,\ring)$ (i.e., the group of $n\times n$ matrices over the
ring \ring, with determinant 1), where each step in the walk consists of
adding or subtracting a row from another row, or a column from another
column; in the ``opposite'' walk, the step corresponding to an addition has
a subtraction, and the step corresponding to subtraction has an addition.
The random walks start from the identity matrix, and will be long enough
for the generated matrices to have sufficient entropy.
Note that in such a scheme, we need to rely on the code generation algorithm
to simultaneously sample $(G,L,H)$, rather than output just $(G,L)$, 
because matrix inversion is not necessarily easy for all rings.

\paragraph{Reed-Solomon code based instantiation.}
In our third instantiation of the generic encoding, we will have
$t(k)$ to be a constant fraction of $k$. The code generation algorithm
\rscodegen is
based on the Reed-Solomon code, and will work with any sufficiently large
field family \fieldfamily.
$\rscodegen^\fieldfamily(\ringid)$ works as follows:
\begin{itemize}

\item Let $k=|\ringid|$. Let $n=ck$, for a sufficiently large
constant%
\footnote{We require $c>4$ so that Assumption~\ref{asm:lincode-specific}(c)
will not be broken by known list-decoding algorithms for Reed-Solomon codes.
$c=8$ may be a safe choice, with larger values of $c$ being more
conservative.}
$c>4$, and $\ell=2k-1$. Denote $\fieldfamily_\ringid$ by \field.

\item Pick distinct points $\x{i}\in\field$ for $i=1,\ldots,k$, and
$\y{i}\in\field$, for $i=1,\ldots,n$ uniformly at random.

\item Define the $n\times k$ matrix $G$ so that it extrapolates a
degree $k-1$ polynomial, given by its value at the $k$ points
\x{i}, to the $n$ evaluation points \y{i}. That is, $G$ is such that for any
$u\in\field^k$, $(Gu)_i = P(\y{i})$ for $i=1,\ldots,n$, where $P$ is the
unique degree $k-1$ polynomial such that $P(\x{i})=u_i$ for $i=1,\ldots,k$.

\item Pick $L\subseteq[n]$ with $|L|=\ell=2k-1$ at random.

\item Let $H$ be the $k\times 2k-1$ matrix such that $(Hv_L)_i = Q(\x{i})$,
where $Q$ is the unique degree $2(k-1)$ polynomial such that $Q(\y{j})=v_j$
for all $j\in L$.

\end{itemize}

\subsubsection{Instantiations of Assumption~\ref{asm:lincode-generic}}
Each of the above instantiations of the encoding leads to a corresponding
instantiation of Assumption~\ref{asm:lincode-generic}. For the sake of
clarity we collect these assumptions below.

\begin{assumption}
\label{asm:lincode-specific}
\begin{enumerate}
\item[(a)]
{\bf [For \randcodegen, with $t(k)=1$.]} For any computationally efficient
field family \fieldfamily, for all sequences $\{(\ringid_k,x_k,y_k)\}_k$,
let $\field_k = \fieldfamily_{\ringid_k}$, and suppose $x_k,y_k \in \field_k$.
Then the ensembles
$\{\lincode{\randcodegen}{\field_k}1{x}\}_k$ and
$\{\lincode{\randcodegen}{\field_k}1{y}\}_k$ are computationally
indistinguishable.
\item[(b)]
{\bf [For \ringcodegen, with $t(k)=1$.]} For any computationally efficient
ring family \ringfamily, for all sequences $\{(\ringid_k,x_k,y_k)\}_k$,
let $\ring_k = \ringfamily_{\ringid_k}$, and suppose $x_k,y_k \in \ring_k$.
Then the ensembles
$\{\lincode{\ringcodegen}{\ring_k}1{x}\}_k$ and
$\{\lincode{\ringcodegen}{\ring_k}1{y}\}_k$ are computationally
indistinguishable.
\item[(c)]
{\bf [For \rscodegen, with $t(k)=k/2$.]}%
\footnote{We can make the assumption weaker by choosing smaller values of
$t$, or larger values of $n$ in \rscodegen.}
For any computationally efficient
field family \fieldfamily, for all sequences $\{(\ringid_k,x_k,y_k)\}_k$,
let $\field_k = \fieldfamily_{\ringid_k}$, and suppose $x_k,y_k \in \field_k^{t(k)}$.
Then the ensembles
$\{\lincode{\randcodegen}{\field_k}t{x}\}_k$ and
$\{\lincode{\randcodegen}{\field_k}t{y}\}_k$ are computationally
indistinguishable, for $t\le k/2$.
\end{enumerate}
\end{assumption}

\section{Product-Sharing Secure Against Passive Corruption}
\label{sec:passive}

In this section we consider the basic two-party functionality \fmult
described below
\begin{itemize}
\item
\alice sends $a \in \ring$ and \bob sends $b\in\ring$ to \fmult.
\item \fmult samples two random elements $z^\alice, z^\bob \in \ring$ such that
$z^\alice+z^\bob = ab$, and gives $z^\alice$ to \alice and $z^\bob$ to \bob.
\end{itemize}
When we want to explicitly refer to the ring in which the computation
takes place we will write the functionality as $\fmult^\ring$.

We present three protocols based on noisy encodings, with increasing
efficiency, but using stronger assumptions, in the \OT-hybrid model for this
functionality (some of which are restricted to when \ring is a field).  We
then present two protocols based on homomorphic encryption. These protocols
are secure only against {\em static} passive corruption. In
Appendix~\ref{app:adaptive} we present a general transformation, that
applies to a class of protocols covering all our above protocols, to obtain
protocols that are secure against adaptive passive corruption, with
erasures.

\subsection{A Basic Protocol with Statistical Security}
\label{sec:basic-stat}

\begin{itemize}
\item {\bf Protocol \pmultbasicx.} \alice holds $a \in \ring$ and \bob holds
$b \in \ring$.
	\begin{itemize}
	\item \bob randomly encodes $b$ as specified in \sectionref{stathiding}.
	i.e., let $(v^0,v^1,\sigma)\from\statencode(\ringid,b)$. Then $\sum_i v^{\sigma_i} = b$.
	
	\item \bob sends $(v^0_i,v^1_i)$ (for $i=1,\ldots,n$) to \alice.

	\item \alice picks a random vector $t\in\ring^n$ and sets $z^\alice =\sum_{i=1}^n t_i$;
	she computes $w^0_i=a v^0_i - t_i$ and $w^1_i = a v^1_i - t_i$.

	\item \alice and \bob engage in $n$ instances of $2\choose1$ \OT, where in
	the $i^\text{th}$ instance \alice's inputs are $(w^0_i,w^1_i)$ and \bob's
	input is $\sigma_i$. \bob receives $w^{\sigma_i}_i$.

	\item \alice outputs $z^\alice$. \bob outputs the sum of all the $n$
	elements he received above: i.e., \bob outputs
	\[z^\bob := \sum_i w^{\sigma_i}_i = \sum_i \left( a v^{\sigma_i} - t_i \right) = a b -
	z^\alice. \]
	\end{itemize}
\end{itemize}

We will pick $n>\log(|\ring|)+k$. Then we have the following result.

\begin{lemma}
Suppose $n>\log(|\ring|) + k$.
Then protocol \pmultbasicx securely realizes \fmult 
against static passive corruption. The security is statistical.
\end{lemma}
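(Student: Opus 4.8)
The plan is to exhibit, for each subset of statically and passively corrupted parties, a simulator that reproduces the joint distribution of the corrupted parties' views together with the honest party's output; since we work in the \OT-hybrid model, the simulator also plays the role of the ideal OT oracle, which in particular means it need not explain any OT output to the sender \alice. I would first dispatch correctness and the degenerate corruption patterns: the displayed identity in protocol \pmultbasicx already gives $z^\alice+z^\bob = \sum_i t_i + \sum_i(a v^{\sigma_i}_i - t_i) = a\sum_i v^{\sigma_i}_i = ab$, and $z^\alice=\sum_i t_i$ is uniform in \ring since $t\from\ring^n$ is uniform, so $(z^\alice,z^\bob)$ has exactly the distribution sampled by \fmult; the both-honest case is handled by running the honest protocol against the dummy adversary, and the both-corrupted case has no honest output to match. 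This leaves the two single-corruption cases.

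For a corrupted \alice, her view consists of $a$, her random tape (from which $t$ and $z^\alice=\sum_i t_i$ are derived), the message $(v^0,v^1)$ received from \bob, and her output $z^\alice$; she receives nothing from the $n$ OT invocations as she is the sender. The simulator would obtain $a$ and $z^\alice$ from \fmult, sample the tape conditioned on $\sum_i t_i = z^\alice$, and sample $(\tilde v^0,\tilde v^1)\from\ring^n\times\ring^n$ uniformly, handing it to \alice as \bob's message. Since the real message is distributed as $\scode_b^{\ring,n}$, \lemmaref{stathiding} — applicable precisely because $n>\log|\ring|+k$ — shows that this is within $2^{-\Omega(k)}$ of uniform, independently of \bob's input $b$; conditioned on $z^\alice$ the real tape is exactly uniform subject to $\sum_i t_i=z^\alice$; and the honest party's output $z^\bob=ab-z^\alice$ is the same function of $a,z^\alice$ in both worlds. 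I expect this to be the only step carrying real content.

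For a corrupted \bob, his view consists of $b$, the internal randomness of $\statencode(\ringid,b)$ (the pattern $\sigma$, the sharing $u$ with $\sum_i u_i=b$, and the decoy coordinates $v^{1-\sigma_i}_i$), the values $w^{\sigma_i}_i = a u_i - t_i$ delivered by the OT oracle, and his output $z^\bob=\sum_i w^{\sigma_i}_i$; the message he sends to \alice adds nothing. The simulator would get $b$ and $z^\bob$ from \fmult, run $\statencode(\ringid,b)$ honestly, then sample $(\tilde w_i)_{i\in[n]}\from\ring^n$ conditioned on $\sum_i\tilde w_i = z^\bob$ and deliver $\tilde w_i$ as the $i$-th OT output. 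The point to verify is that in the real execution, conditioned on all of \bob's own randomness, $t$ is a fresh uniform vector, so $(a u_i - t_i)_i$ is uniform in $\ring^n$ and independent of $(\sigma,u,\text{decoys})$, while $\sum_i w^{\sigma_i}_i = a\sum_i u_i - \sum_i t_i = ab - z^\alice = z^\bob$ and the honest output satisfies $z^\alice = ab - z^\bob$; these constraints coincide with what \fmult induces in the ideal world, so the simulation here is perfect.

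Combining the three corruption patterns yields statistical UC-security of \pmultbasicx for \fmult against static passive corruption in the \OT-hybrid model, with total error $2^{-\Omega(k)}$ arising solely from the corrupted-\alice case. The main obstacle is conceptual rather than computational: one must keep the honest party's output — a deterministic function of secret values the corrupted party never sees — consistent with what the corrupted party does see (its OT-masked shares, or its uniform-looking encoding), but this reduces to the two observations above together with \lemmaref{stathiding} for the statistically hiding encoding.
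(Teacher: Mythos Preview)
Your proposal is correct and follows essentially the same approach as the paper's proof sketch: perfect simulation for a corrupted \bob, and statistical simulation for a corrupted \alice via \lemmaref{stathiding}, with the simulator fixing \alice's random tape so that $\sum_i t_i=z^\alice$. The only cosmetic difference is that the paper's simulator for corrupted \alice runs honest \bob on a freshly sampled input $\alpha$ (yielding $\scode_\alpha^{\ring,n}$), whereas you sample $(\tilde v^0,\tilde v^1)$ uniformly; both are within $2^{-\Omega(k)}$ of the real $\scode_b^{\ring,n}$ by \lemmaref{stathiding}, so the two constructions are interchangeable.
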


\begin{sketch}
When \bob is corrupted, it is easy to construct a simulator to
obtain perfect security.
The more interesting case is when \alice is corrupted. Then the
simulator \simx behaves as follows.
\begin{itemize}
\item Send \alice's input to \fmult and obtain $z^\alice$ in response.
\item Set $t\in\ring^n$ in \alice's random tape such that $\sum_i t_i =
z^\alice$. Note that \alice's output will then be $z^\alice$.
\item Sample an element $\alpha\from\ring$ and run the honest program for
	\bob using this input. The only message produced by the simulation
	is a pair of vectors $(v^0,v^1)$.
\end{itemize}
By Lemma~\ref{lem:stathiding}, the message produced by the simulator is statistically close to the
message produced by \bob in the real execution (both being statistically
close to the uniform distribution over
$\ring^n\times\ring^n$), and the simulation is statistically
indistinguishable from a
real execution.
\end{sketch}

\subsection{Basic Protocol Using Linear Codes for Rings}
\label{sec:basic-lincode}

We improve on the efficiency of the protocol in \sectionref{basic-stat} by
depending on computational assumptions regarding linear codes.  One
advantage of the protocol in this section is that it does not explicitly
depend on the size of the underlying ring. Restricted to fields, this
construction can use the code generation \randcodegen; for arbitrary rings
with unity, the construction can use \ringcodegen. Note that both coding
schemes generate $(G,L,H)$ such that $HG|_L = I$, which is what the protocol
depends on. It uses these codes in a noisy encoding with $t=1$.

\begin{itemize}
\item {\bf Protocol \pmultlcx.} \alice holds $a \in \ring$ and \bob holds
$b \in \ring$.
    \begin{itemize}
    \item \bob randomly encodes $b$ using $\encode{\codegen}{\ring}1{b}$
        to get $(G,H,L,v)$ as the private output. (Note that $t=1$ in
        the encoding, and $HG|_L = I$.)
    \item \bob sends $(G,v)$ to \alice.

    \item \alice picks a random vector $x\in\ring^k$ and sets $w = av-Gx$.

    \item \alice and \bob engage in an $n\choose{k}$-\OT where \alice's
    inputs are $(w_1,\ldots,w_n)$ and \bob's input is $L$. \bob receives
    $w_i$ for $i\in L$. (Recall that when considering passive corruption,
    an $n\choose{k}$-\OT maybe implemented using $n$ instances of
    $2\choose1$-\OT. Here \OT is a string-OT and the inputs are labels for
    the ring elements.)

    \item \alice outputs $z^\alice := x_1$, the first co-ordinate of $x$.
    \bob outputs $z^\bob := \left(H w_L\right)_1= a b - x_1$.
    \end{itemize}
\end{itemize}

\begin{lemma}
If Assumption~\ref{asm:lincode-generic} holds for a code generation scheme
\codegen, with $t=1$, then
Protocol~\pmultlcx securely realizes
\fmult, against static passive corruption.
\end{lemma}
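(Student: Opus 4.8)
The plan is to exhibit one simulator \simx covering static, passive corruption of either party in Protocol~\pmultlcx, and to establish indistinguishability in each case; if neither party is corrupted, correctness of the shares (below) suffices, and if both are corrupted there is nothing to hide. First I would record \emph{correctness}, which also underlies the corrupted-\alice analysis. Since $v_i = (Gu)_i$ for $i\in L$ and $w = av - Gx$, the coordinates \bob recovers through the OT are $w_L = a v_L - (Gx)_L = G|_L(au - x)$; hence, using $HG|_L = I$ and $u_1 = b$, \bob's output is $z^\bob = (Hw_L)_1 = (au - x)_1 = ab - x_1$, so $z^\alice + z^\bob = ab$ and the sharing is uniform (because $x_1$ is uniform over \ring).

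\textbf{\bob corrupted.} Here I expect perfect simulation. \bob's view is his input $b$, his coins (which fix $(G,H,L,v)$), and the OT outputs $(w_i)_{i\in L}$; note that \alice, the $n\choose{k}$-\OT sender, receives nothing, so nothing else appears. The simulator sends $b$ to \fmult, receives $z^\bob$, runs honest \bob on $b$ with fresh coins to fix $(G,H,L,v)$, samples $\rho\from\ring^k$ uniformly subject to $\rho_1 = z^\bob$, and presents $w_L := G|_L\rho$ as the OT outputs. In a real run, put $\xi = au - x$, which is uniform over $\ring^k$ since $x$ is; then $w_L = G|_L\xi$, and by $HG|_L = I$ we get $z^\bob = (Hw_L)_1 = \xi_1$ and $z^\alice = ab - \xi_1$. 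This is identical to the simulated distribution, where $\rho$ is uniform, $z^\bob = \rho_1$, and $(z^\alice, z^\bob)$ is the random sharing returned by \fmult.

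\textbf{\alice corrupted.} Here security is computational, via Assumption~\ref{asm:lincode-generic} with $t=1$. \alice's view is $(a, x, (G,v))$; she learns nothing from the OT; and the environment additionally sees \bob's output, which by correctness equals $ab - x_1$ in the real world. The simulator sends $a$ to \fmult, obtains $z^\alice$, writes a uniform $x$ with $x_1 = z^\alice$ into \alice's tape (so \alice outputs $z^\alice$ and \bob's ideal output $ab - z^\alice$ is consistent), and generates $(G,v)$ as the public output of $\encode{\codegen}{\ringfamily}{1}{\ringid,0}$, i.e.\ encoding $0$ rather than \bob's true input $b$. The real and ideal experiments then differ only in whether $(G,v)$ is drawn from $\lincode{\codegen}{\ring}{1}{b}$ or $\lincode{\codegen}{\ring}{1}{0}$. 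Given a PPT environment \env distinguishing them, I would, by non-uniformity, hardwire \env's input-selection randomness so that the common input $\ringid_k$ and \bob's input $b_k$ are fixed for each $k$, and then build a distinguisher for the ensembles $\{\lincode{\codegen}{\ring_k}{1}{b_k}\}_k$ and $\{\lincode{\codegen}{\ring_k}{1}{0}\}_k$: on a challenge $(G,v)$ it runs \env with the hardwired coins, hands it the view $(a_k, x, (G,v))$ together with \bob's output $a_k b_k - x_1$, and echoes \env's guess. A challenge encoding $b_k$ reproduces the real execution and one encoding $0$ the ideal one, so \env's advantage contradicts Assumption~\ref{asm:lincode-generic}.

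The only delicate point is this last reduction: one must check that \emph{every} part of the ideal view other than \bob's message --- especially \bob's output, which the environment observes --- is produced exactly as in the real world, so that the single invocation of Assumption~\ref{asm:lincode-generic} closes the entire gap; the identity $HG|_L = I$ is precisely what makes \bob's output $ab - x_1$ on both sides. The averaging argument used to treat \bob's input as the fixed sequence $\{x_k\}$ required by the assumption is routine.
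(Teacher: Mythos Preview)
Your proof is correct and follows the paper's own approach: the simulator plants $x_1=z^\alice$ on \alice's tape and replaces \bob's message by an encoding of a dummy value, then appeals to Assumption~\ref{asm:lincode-generic}. The only cosmetic differences are that the paper encodes a fresh random $\alpha$ rather than $0$ (both are covered by the assumption), and that you spell out the \bob-corrupted case and the reduction in more detail than the paper's sketch does.
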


\begin{sketch}
The interesting case is when \alice is corrupt and \bob is honest.
Then the simulator \simx behaves as follows.
\begin{itemize}
\item Send \alice's input to \fmult and obtain $z^\alice$ in response.
\item Set $x\in\ring^n$ in \alice's random tape conditioned on $x_1 = z^\alice$.
Note that \alice's output will then be $z^\alice$.
\item Sample an element $\alpha\in\ring$ and run the honest program for
    \bob using this input. The only message produced by the simulation
    is the pair $(G,v)$.
\end{itemize}

$(G,v)$ is the only message output by (simulated) \bob
in the (simulated) protocol. In the real execution this
message is distributed according to
$\lincode{\codegen}{\ring}1{b}$
whereas in the simulation it is distributed according to
$\lincode{\codegen}{\ring}1{\alpha}$.
By the assumption in the lemma, we conclude that these
two distributions are indistinguishable (even if $b$ and
$\alpha$ are known), and hence
the view of the environment in the real execution is
indistinguishable from that in the simulated execution.
\end{sketch}

\subsection{Amortization using Packed Encoding}
\label{sec:basic-packed}

In this section we provide a passive-secure protocol in the
\OT-hybrid model for {\em multiple instances of} the basic two-party
functionality \fmult.  That is, we realize the two-party
functionality \fmultt which takes as inputs $\at \in \field^t$ and
$\bt \in \field^t$, and outputs random vectors $\zt^\alice$ and
$\zt^\bob$ to \alice and \bob respectively, such that $\zt^\alice +
\zt^\bob = \at \bt := (a_1b_1,\ldots,a_tb_t)$ (note that
multiplication in $\field^t$ refers to coordinate-wise
multiplication).

We use the noisy encoding scheme with the code generation algorithm
\rscodegen. We shall choose $t=k/2$.

\begin{itemize}
\item {\bf Protocol \pmulttx.}  \alice holds $\at = (a_1,\ldots,a_t) \in \field^t$ and
\bob holds $\bt = (b_1,\ldots,b_t) \in \field^t$.
    \begin{itemize}
    \item \bob randomly encodes $x$ using $\encode{\rscodegen}{\ring}t{x}$
        to get $(G,H,L,\vn)$ as the private output.
    \item \bob sends $G$ and $\vn = (v_1,\ldots,v_n) \in\field^n$ to
        \alice. Recall that for some degree $k-1$ polynomial \pb, $v_i := \pb(\y{i})$ for $i\in L$
        (and $v_i$ is a random field element if $i\not\in L$).

    \item Note that the points \y{i} and \x{i} are implicitly specified by
        $G$. \alice picks a random degree $k-1$ polynomial \pa such that
        $\pa(\x{i}) = a_i$ for $i=1,\ldots,t$, and also a random degree
        $2(k-1)$ polynomial \pc.  \alice computes $w_i := \pa(\y{i})v_i -
        \pc(\y{i})$ for $i=1,\ldots,n$.

    \item \alice and \bob engage in a $n\choose{2k-1}$ \OT, where \alice's
        inputs are $(w_1,\ldots,w_n)$ and \bob's input is $L$. \bob receives
        $w_i$ for $i\in L$.

    \item \bob computes $Hw|_L$. Note that then $(Hw|_L)_i = \pz(\x{i})$ where
    \pz is the unique degree $2(k-1)$ polynomial \pz such that $\pz(\y{i}) = w_i$ for $i\in L$.

    \item \alice sets $z^\alice_i := \pc(\x{i})$ for $i=1,\ldots,t$.
        and \bob sets $z^\bob_i := \pz(\x{i})$ for $i=1,\ldots,t$.

        Note that (if \alice and \bob are honest), \pz is the degree $2(k-1)$
        polynomial $\pa\pb - \pc$, and hence $z^\alice_i + z^\bob_i =
        \pa(\x{i})\pb(\x{i}) = a_i b_i$.

    \item \alice outputs $\zt^\alice := (z^\alice_1,\ldots,z^\alice_t)$ and
        \bob outputs $\zt^\bob := (z^\bob_1,\ldots,z^\bob_t)$.

    \end{itemize}
\end{itemize}

\paragraph{Remark about computational efficiency.}
The computational complexity of Protocol \pmulttx (ignoring the use
of \OT) is dominated by the evaluation and interpolation of
polynomials (note that the matrices $G$ and $H$ can be stored in an
implicit form just by storing the points \y{i} and \x{i}). As such,
in general the complexity would be $O(k \log^2 k)$ for randomly
chosen evaluation points~\cite{vzGG99}. We note, however, that this
complexity can be reduced to $O(k \log k)$ by a more careful
selection of evaluation points~\cite{vzGG99}, at the expense of
having to assume that Assumption ~\ref{asm:lincode-specific}(c)
holds also with respect to this specific choice of evaluation
points.

\begin{lemma}
If Assumption~\ref{asm:lincode-specific}(c) holds, then
Protocol~\pmulttx securely realizes \fmultt,
against static passive
corruption.
\end{lemma}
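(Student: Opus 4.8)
The plan is to mirror the proof sketches of Protocols \pmultbasicx and \pmultlcx. I would exhibit a straightline simulator \simx separately for each of the two static passive corruption patterns, and then appeal to the standard UC hybrid argument over the environment --- which suffices because \fmultt is a one-shot (non-reactive) functionality --- to conclude UC-security in the \OT-hybrid model. Throughout I would use that $t=k/2$, so that fixing $t$ prescribed values still leaves a uniformly random choice of a degree $k-1$ polynomial (as used for \pa) and of a degree $2(k-1)$ polynomial (as used for \pc and \pz), since $t<k\le 2k-1$; and that \rscodegen produces distinct evaluation points $\x{1},\dots,\x{k},\y{1},\dots,\y{n}$, which is implicit already in the definitions of $G$ and $H$ and is what makes Bob's recovery of \pz from $\{w_i\}_{i\in L}$ unambiguous. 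Adaptive security with erasures would be obtained separately via the transformation of Appendix~\ref{app:adaptive}, so here I would treat only static passive corruption.

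For the case where \alice is corrupted, the simulator would read \alice's input $\at$ from her input tape, forward it to \fmultt, and receive $\zt^\alice$. It then fixes \alice's random tape as an honest \alice would, except that it draws \pc uniformly among degree $2(k-1)$ polynomials subject to $\pc(\x{i})=z^\alice_i$ for $i=1,\dots,t$; because conditioning a uniform \pc on \emph{uniformly random} prescribed values keeps it uniform, \alice's random tape has exactly the honest distribution, and her output $(\pc(\x{1}),\dots,\pc(\x{t}))$ equals $\zt^\alice$. Finally the simulator runs the honest \bob program on a uniformly random input $\alphat\in\field^t$ to produce the single incoming message \alice ever sees, namely $(G,\vn)$; as the sender in the $\binom{n}{2k-1}$-\OT she learns nothing else. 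In the real execution $(G,\vn)$ is distributed as $\lincode{\rscodegen}{\field}{t}{\bt}$ and in the simulation as $\lincode{\rscodegen}{\field}{t}{\alphat}$, which by Assumption~\ref{asm:lincode-specific}(c) are computationally indistinguishable even given $\bt$ and $\alphat$. I would then check that everything else the environment sees is identical in the two worlds: in the real world $w_i=(\pa\pb-\pc)(\y{i})$ for $i\in L$ (using $v_i=\pb(\y{i})$ there), so the degree $2(k-1)$ polynomial Bob interpolates is exactly $\pa\pb-\pc$ and his output is $z^\bob_i=(\pa\pb-\pc)(\x{i})=a_ib_i-\pc(\x{i})$, which depends only on $\at,\bt$ and \pc; hence $\zt^\bob=\at\bt-\zt^\alice$, exactly what \fmultt hands the honest \bob in the ideal world.

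For the case where \bob is corrupted, the simulator would read \bob's input $\bt$, forward it to \fmultt, and receive $\zt^\bob$. It runs the honest \bob encoding to generate $(G,H,L,\vn)$ --- thereby learning all of \bob's view except the \OT output --- and then samples a uniformly random degree $2(k-1)$ polynomial \pz subject to $\pz(\x{i})=z^\bob_i$ for $i=1,\dots,t$, supplying \bob, as the $\binom{n}{2k-1}$-\OT answer, the values $w_i:=\pz(\y{i})$ for $i\in L$. The point to verify is that this reproduces \emph{exactly} the real conditional distribution of $(w_i)_{i\in L}$ given \bob's randomness and given $\zt^\bob$: in the real execution $w_i=\pa(\y{i})v_i-\pc(\y{i})=(\pa\pb-\pc)(\y{i})$ for $i\in L$, and since honest \alice chooses \pc uniformly among degree $2(k-1)$ polynomials, independently of \pa and of \bob's coins, the polynomial $\pa\pb-\pc$ is itself uniform of degree $2(k-1)$; conditioning on its $t$ values $z^\bob_i=(\pa\pb-\pc)(\x{i})$ therefore leaves precisely the distribution from which the simulator samples \pz. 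Bob then recovers the unique degree $2(k-1)$ polynomial through $\{(\y{i},w_i)\}_{i\in L}$, which is this same \pz, and outputs $\pz(\x{i})=z^\bob_i$, matching the ideal world, while the honest \alice's ideal-world output $\zt^\alice$ satisfies $\zt^\alice+\zt^\bob=\at\bt$, consistent with the real-world identity $z^\alice_i=a_ib_i-\pz(\x{i})$. Thus this simulation would be perfect.

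The step I expect to require the most care --- the main obstacle --- is this last distributional claim: the simulator does \emph{not} know \alice's input $\at$, yet must produce the \OT answers with the correct distribution, and the reason it can is precisely that the masking polynomial \pc one-time-pads the product polynomial $\pa\pb$ all the way down to the $t$ leaked evaluations that are pinned by $\zt^\bob$. The remaining ingredients are routine: correctness ($z^\alice_i+z^\bob_i=a_ib_i$) was already checked when the protocol was described, the $\binom{n}{2k-1}$-\OT is treated as an ideal primitive (and, in the passive model, realizing it from $\binom{2}{1}$-\OT's leaks nothing beyond the selected coordinates), and splicing the two per-corruption simulators into the usual UC hybrid argument over the environment completes the proof.
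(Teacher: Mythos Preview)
Your proposal is correct and takes essentially the same approach as the paper: for the interesting case where \alice is corrupted, you build the identical simulator (fix \pc on \alice's tape to hit $\zt^\alice$, run \bob on a dummy input $\alphat$, and reduce indistinguishability of $(G,\vn)$ to Assumption~\ref{asm:lincode-specific}(c)). You additionally spell out the \bob-corrupted case in full, showing perfect simulation via the one-time-pad argument on \pc; the paper simply treats this as the easy case without details, so your write-up is strictly more thorough but not a different route.
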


\begin{sketch}
The interesting case is when \alice is corrupt and \bob is honest.
Then the simulator \simx behaves as follows.
\begin{itemize}
\item Send \alice's input \at to \fmultt and obtain $\zt^\alice$ in response.
\item Set \alice's random tape so that she picks \pc such that
$\pc(\x{i})=z^\alice_i$ for $i=1,\ldots,t$.
Note that \alice's output will then be $\zt^\alice$.
\item Sample $\alphat\in\field^t$ and run the honest program for
    \bob using this input. The only message produced by the simulation
    is the vector \vn.
\end{itemize}
Indistinguishability of the simulation follows
because of the assumption in the lemma:
given \alphat and \bt,
$\lincode{\rscodegen}{\ring}t{\alphat}$
and
$\lincode{\rscodegen}{\ring}t{\bt}$
are computationally indistinguishable.
\end{sketch}

\subsection{Protocols based on Homomorphic Encryption}
\label{sec:homo}

In this section, we construct protocols (secure against passive
adversaries) for the basic two-party functionality \fmult, based on
homomorphic encryption. Since we work in the context of rings, by
homomorphic encryption (informally speaking), we mean an encryption
scheme where it is possible to both: (1) given encryptions of two
ring elements $x$ and $y$, it is possible to generate an encryption
of $x+y$; and (2) given a ring element $\alpha$ and an encryption of
a ring element $x$, it is possible to generate an encryption of
$\alpha x$.  It is important to stress two points:
\begin{itemize}
\item Any encryption scheme that is \emph{group-homomorphic} for the standard representation of the (additive) group $\mathbb{Z}_m$ is immediately homomorphic in our sense with respect to the \emph{ring} $\mathbb{Z}_m$.
\item As such, our notion of homomorphic encryption, even though it is defined in the context of rings, \emph{is different from and should not be confused with} the notion of ``fully'' or ``doubly'' homomorphic encryption.  In particular, we do not require that given encryptions of two ring elements $x$ and $y$, it is possible to generate an encryption of $x \cdot y$, where $\cdot$ is the ring multiplication operation.
\end{itemize}

Note that while most homomorphic encryption schemes
from the literature fit this definition (since they are group-homomorphic for the standard representation of the (additive) group $\mathbb{Z}_m$), some do not; for example,
the El Gamal encryption scheme is group-homomorphic for a subgroup of $\mathbb{Z}_p^*$, but there does not seem to be any ring structure for which El Gamal encryption would be homomorphic in our sense\footnote{
Since $\mathbb{Z}_p^*$ is cyclic, it can be associated with the ring $\mathbb{Z}_{p-1}$; however there does not seem to be any computationally efficient way to consider El Gamal encryption to be homomorphic for any nontrivial subring of this ring, as it would seem to require computing discrete logarithms in $\mathbb{Z}_p^*$ or its subgroups.}.

Furthermore, we consider two types of homomorphic encryption
schemes. Informally speaking, the issue that separates these two
types of homomorphic encryption schemes is whether the ring
underlying the homomorphic encryption scheme can be specified
beforehand (which we call a ``controlled ring" scheme), or whether
it is determined by the key generation algorithm (which we call an
``uncontrolled ring"). For example, the key
generation algorithm of the classic Goldwasser-Micali encryption
scheme~\cite{GoldwasserMi84} based on quadratic residuosity always
produces keys for a $\mathbb{Z}_2$-homomorphic encryption scheme,
and is thus a ``controlled ring" scheme.  Note that by considering
higher residuosity classes, Benaloh~\cite{Benaloh87thesis} similarly
constructs ``controlled ring" homomorphic encryption schemes for the
rings $\mathbb{Z}_p$, where $p$ is a polynomially bounded (small)
prime number.  On the other hand, schemes like the Paillier
cryptosystem~\cite{Paillier99} are homomorphic with respect to the
ring $\mathbb{Z}_n$, where $n$ is a randomly chosen product of two
large primes chosen at the time of key generation; $n$ cannot be
specified ahead of time. Thus, the Paillier scheme is an example of
an ``uncontrolled ring" homomorphic encryption scheme.

We first describe formally what we call ``controlled ring"
homomorphic encryption:

\begin{definition}
A \emph{``controlled ring" homomorphic encryption} scheme
corresponding to a concrete ring family \ringfamily is a
tuple of algorithms $(G, E, D, C)$, such that:

\begin{enumerate}
\item $(G,E,D)$ is a semantically secure public-key encryption scheme, except that the algorithm $G$
takes as input both $1^k$ and \ringid, and the set of values that
can be encrypted using the public-key output by $G$ are the elements
of $\ringfamily_\ringid$.

\item For any $x_1, x_2 \in \ringfamily_\ringid$, given $(pk,sk) \leftarrow G(1^k,\ringid)$ and two
ciphertexts $c_1 = E(pk, x_1)$ and $c_2 = E(pk, x_2)$, we have that
$C(pk, c_1, c_2)$ outputs a distribution whose statistical distance
to the distribution $E(pk, x_1 + x_2)$ is negligible in $k$.

\item For any $x, \alpha \in \ringfamily_\ringid$, given $(pk,sk) \leftarrow G(1^k,\ringid)$ and a
ciphertext $c = E(pk, x)$, we have that $C(pk, c, \alpha)$ outputs a
distribution whose statistical distance to the distribution $E(pk,
x\cdot\alpha)$ is negligible in $k$.
\end{enumerate}

\end{definition}
Such controlled ring homomorphic encryption schemes immediately give
rise to a protocol for our basic two-party functionality \fmult, as
we now demonstrate.

\begin{itemize}
\item {\bf Protocol $\theta$.} \alice holds $a \in \ringfamily_\ringid$ and \bob holds $b \in \ringfamily_\ringid$.

    \begin{itemize}

    \item (Initialization)  \alice runs $G(1^k, \ringid)$ to obtain $(pk,sk)$.  This is done only once, as  the same public key can be used as many times as necessary.

    \item \alice computes $c = E(pk, a)$, and sends $c$ to \bob.

    \item \bob chooses $r \in \ringfamily_\ringid$ at random, computes $c' = E(pk,r)$, and then computes $c'' = C(pk, C(pk, c, b), c')$ and sends $c''$ to \alice.  Note that $c''$ is an encryption of $ab+r$.  \bob outputs $-r$.

    \item \alice computes $v = D(sk,c'')$, and outputs $v$.
    \end{itemize}
\end{itemize}

The correctness and privacy properties of this protocol (against
passive corruptions) follow immediately from the definition of
controlled ring homomorphic encryption.

As mentioned above, unfortunately many known homomorphic encryption
schemes do not allow complete control over the ring underlying the
homomorphic encryption scheme, and so they do not satisfy the
definition of controlled ring homomorphic encryption schemes.  We
deal with these types of homomorphic encryption schemes separately
below.

\begin{definition}
An \emph{``uncontrolled ring" homomorphic encryption}
scheme corresponding to a concrete ring family \ringfamily
is a tuple of algorithms $(G, E, D, C)$, such that:

\begin{enumerate}
\item $(G,E,D)$ is a semantically secure public-key encryption scheme, except that the algorithm $G$ outputs $\ringid$ along with the public and private keys, and the set of values that can be encrypted using the public-key output by $G$ are the elements of $\ringfamily_\ringid$.  Furthermore, it is guaranteed that $|\ringfamily_\ringid| > 2^{k}$, and $|\ringfamily_\ringid| < 2^{qk}$ for some universal constant $q$.

\item Given $(pk,sk,\ringid) \leftarrow G(1^k)$, for any $x_1, x_2 \in \ringfamily_\ringid$, and given two
ciphertexts $c_1 = E(pk, x_1)$ and $c_2 = E(pk, x_2)$, we have that
$C(pk, c_1, c_2)$ outputs a distribution whose statistical distance
to the distribution $E(pk, x_1 + x_2)$ is negligible in $k$.

\item Given $(pk,sk,\ringid) \leftarrow G(1^k)$, for any $x, \alpha \in \ringfamily_\ringid$, given a ciphertext $c = E(pk, x)$, we have that $C(pk, c, \alpha)$ outputs a distribution whose statistical distance to the distribution $E(pk, \alpha \cdot x)$ is negligible in $k$.
\end{enumerate}

\end{definition}

In the case of uncontrolled ring homomorphic encryption schemes, we
will not consider general rings, but rather focus our attention on
the special case of $\mathbb{Z}_M$ (i.e. $\mathbb{Z}/M\mathbb{Z}$).
Here, we will assume that we are using the standard representation
of this ring (as integers in $[0,M-1]$ working modulo $M$).
We note that this is our only protocol where a specific representation of the underlying ring is important and required for our result.
In this
case, using a little bit of standard additional machinery, we can
once again construct a quite simple protocol for our basic two-party
functionality \fmult, as a show below.

\begin{itemize}
\item {\bf Protocol $\psi$.} \alice holds $a \in \mathbb{Z}_M$ and \bob holds $b \in \mathbb{Z}_M$.

    \begin{itemize}

    \item (Initialization)  Let $k' = \lceil 2 \log M \rceil + 2 + k$.  \alice runs $G(1^{k'})$ to obtain $(pk,sk,N)$, where $N > 4(2^k M^2)$.  This is done only once, as the same public key can be used as many times as necessary.

    \item \alice computes $c = E(pk, a)$, and sends $c$ to \bob.

    \item \bob chooses $r \in \mathbb{Z}_M$ and $s \in \mathbb{Z}_{2(2^k M)}$ at random, computes $c' = E(pk,r)$, $c'' = E(pk, sM)$, and then computes $c'''$ using the algorithm $C$ repeatedly so that $c'''$ is an encryption of $ab+r +sM$.  Note that $ab+r +sM < N$, by choice of parameters. \bob then sends $c'''$ to \alice, and outputs $-r \mod M$.

    \item \alice computes $v = D(sk,c''')$, and outputs $v \mod M$.
    \end{itemize}
\end{itemize}

A straightforward counting argument shows that for any $a,b,r \in
\mathbb{Z}_M$, setting $w = ab+r \mod M$, we have that the
statistical distance between the distributions $D_1 = (ab+r +sM)$
and $D_2 = (w+sM)$, where $s \in \mathbb{Z}_{2(2^k M)}$ is chosen at
random, is at most $2^{-k}$.  This is because $ab+r \le 2M^2$, and
so there are at most $2M$ choices of $s$ for which $w+sM$ would not
be in the support of $D_1$.  Thus, by the definition of uncontrolled
ring homomorphic encryption, the correctness and privacy properties
of this protocol (against passive corruptions) follow immediately.

\paragraph{Matrix rings.} Although we focus on the case of $\mathbb{Z}_M$ above, it is easy
to see that this approach can generalized to other related settings,
such as the ring of $n$ by $n$ matrices over $\mathbb{Z}_M$, in a
straightforward manner.  At a high level, this is because any
$\mathbb{Z}_M$-homomorphic encryption scheme immediately gives rise
to an encryption scheme that is homomorphic for the ring of $n$ by
$n$ matrices over $\mathbb{Z}_M$. In this context, by simply
encrypting each entry in the matrix, the homomorphic property of
matrix addition would follow immediately from the homomorphic
property with respect to addition of the underlying encryption
scheme.  The slightly interesting case is the ``scalar''
multiplication (by a known matrix) property of the homomorphic
encryption scheme.  It is easy to see that this property also holds,
since each entry of the product matrix is just a degree-2 function
of the entries of the two matrices being multiplied. Thus, for
instance in our case of $n$ by $n$ matrices, one can compute the
\fmult functionality with only $O(n^2)$ ciphertexts communicated,
even though no algebraic circuits for matrix multiplication are
known (or generally believed to exist) with $O(n^2)$ gates.

The discussion regarding matrices above is implicitly written in the
context of controlled-ring homomorphic encryption. In the context of
uncontrolled-ring homomorphic encryption, using the same ideas,
Protocol $\psi$ can be directly adapted to allow one to compute $m$
degree-2 functions over $n$ variables while communicating only
$O(m+n)$ ciphertexts. This allows one to use uncontrolled-ring
homomorphic encryption to compute the \fmult functionality for $n$
by $n$ matrices over $\mathbb{Z}_M$ with only $O(n^2)$ ciphertexts
(for an encryption scheme over $\mathbb{Z}_N$ where $\log N$ is $O(\log n+k+\log M)$) being communicated.

\section{General Arithmetic Computation against Active Corruption}
\label{sec:general}

As already discussed in Section~\ref{sec-techniques}, our general
protocols are obtained by applying the general technique of~\cite{IshaiPrSa08}, with appropriate choices of the ``outer protocol'' and the ``inner protocol'' that apply to the arithmetic setting.

More concretely, the result from~\cite{IshaiPrSa08} shows how obtain a UC-secure protocol in the
\OT-hybrid model for any (probabilistic polynomial time) two-party functionality $f$
against active corruption by making a {\em black-box} use of the following two ingredients:
\begin{enumerate}
\item an ``outer protocol'' for $f$ which employs $k$ auxiliary parties (servers); this protocol should be UC-secure against active corruption provided that only some constant fraction the servers can be corrupted; and
\item an ``inner protocol'' for implementing a reactive two-party functionality (``inner functionality'') corresponding to the local computation of each
server, in which the server's state is secret-shared between Alice and Bob. In contrast to the outer protocol, this protocol only needs to be secure against {\em passive} corruption.
The inner protocol can be implemented in the OT-hybrid model.
\end{enumerate}

While the general result of~\cite{IshaiPrSa08} is not sensitive to the type
of secret sharing used for defining the inner functionality, in our setting
it is crucial that any ring elements stored by a server will be
secret-shared between Alice and Bob using additive secret sharing over the
ring. Given our protocols for \fmult, this will let us have the
the inner protocol use the ring in a black-box fashion, as described below.

Note that the only operations that the server in an outer protocol needs to
do  one of the following operations: add two ring elements, multiply two
ring elements, sample a ring element uniformly at random, or check if two
ring elements are equal.  If there are oprations which do not involve any
ring elements, the inputs and outputs to these operations are maintained as
bit strings and an arbitrary protocol for boolean circuit evaluation (e.g.,
GMW in the \OT-hybrid model) can be employed. Among the operations that do
involve ring elements, addition and sampling are straightforward: whenever a
server in the outer protocol needs to locally add two ring elements $x,y$,
this can be done locally in the inner protocol by having each of Alice and
Bob add their local shares of the two secrets.  When a server in the outer
protocol needs to sample a random ring element, Alice and Bob locally sample
the shares of this element. For multiplication, when a server needs to
multiply two ring elements $x,y$ in the outer protocol, the inner protocol
will need to apply a sub-protocol for the
following two-party functionality:
\begin{itemize}
\item \alice holds $x_\alice$ and $y_\alice$, \bob holds $x_\bob$ and $y_\bob$.
\item The server should compute random values $c_\alice$ and $c_\bob$ such that
$c_\alice+c_\bob = (x_\alice+x_\bob)(y_\alice+y_\bob)$.
\item \alice is given $c_\alice$ and \bob is given $c_\bob$.
\end{itemize}

The above functionality can be realized (in the semi-honest model) by making
two calls to any of the product-sharing protocols from \sectionref{passive}.
Specifically, a secure reduction from the above functionality to \fmult may
proceed as follows:
\begin{itemize}
\item \alice and \bob engage in two instances of \fmult with
inputs $(x_\alice,y_\bob)$ and $(y_\alice,x_\bob)$ and obtain
$(\alpha_\alice,\alpha_\bob)$ and $(\beta_\alice,\beta_\bob)$ where
$\alpha_\alice+\alpha_\bob = x_\alice y_\bob$ and
$\beta_\alice+\beta_\bob = y_\alice x_\bob$.
\item \alice outputs $c_\alice := x_\alice y_\alice + \alpha_\alice +
\beta_\alice$ and \bob outputs
$c_\bob := x_\bob y_\bob + \alpha_\bob + \beta_\bob$.
\end{itemize}

There will be several such instances of \fmult in each round.  Note that  Protocol~\pmulttx
can be used to realize multiple instances of
\fmult with a constant amortized algebraic complexity per instance.

The final type of computation performed by servers involving ring elements
is equality check between two ring elements. In all the outer protocols we
employ, the result of such an equality test is made public. (In fact, in our
setting of ``security with abort,'' the outer protocols we consider will
abort whenever an inequality is detected by an honest server in such an
equality test.) The corresponding inner functionality needs to check that
$x_a+x_b=y_a+y_b$, where $x_a,y_a$ are identifiers of ring elements known to
Alice and $x_b,y_b$ are known to Bob. One way to do this would be by letting
Alice locally compute $x_a-y_a$, Bob locally compute $y_b-x_b$, and then
using an arbitrary inner protocol for boolean circuits for comparing the two
identifiers. This relies on our assumption that each ring element has a
unique identifier.  However, in fact in the outer protocols we consider,
there is a further structure that allows us to avoid this generic approach.
The elements to be compared by a server in our outer protocols will always
be known to one of the parties (Alice or Bob), and hence in a passive-secure
implementation this comparison can be done locally by that party. (This is
referred to as a ``type I computation'' in~\cite{IshaiPrSa08}. Note that
given a passive-secure implementation, the compiler of \cite{IshaiPrSa08}
ensures over all security.)

Below we summarize the results we obtain by combining appropriate choices for the outer protocol with the inner protocols obtained via the shared-product protocols from
\sectionref{passive}.
All these results
can be readily extended to the multi-party setting as well, where the complexity grows polynomially with the number of parties; see Appendix~\ref{app-multiparty}.

\paragraph{Unconditionally secure protocol.}
To obtain our unconditional feasibility result for black-box rings, we use the protocol from~\cite{CramerFIK03} (which makes a black-box use of an arbitrary ring) as the outer protocol and the unconditional protocol \pmultbasicx to build the inner protocol. This yields the following result:
\begin{theorem}
For any arithmetic circuit \ckt, there exists a protocol $\Pi$ in the
\OT-hybrid model that is a secure black-box realization of \ckt-evaluation
for the set of all ring families. The security holds against adaptive
corruption with erasures, in computationally unbounded environments.
\end{theorem}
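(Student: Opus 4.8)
The plan is to instantiate the general compiler of~\cite{IshaiPrSa08}, recalled at the beginning of this section, with a suitable outer protocol and the statistically secure inner product-sharing protocol \pmultbasicx of \sectionref{basic-stat}. Recall that the compiler takes as black boxes (1) an outer MPC protocol for \ckt-evaluation employing $k$ servers that is UC-secure against active corruption of some constant fraction of the servers (and, for the final claim, adaptively secure in the no-erasure model), and (2) an inner two-party protocol, secure against passive corruption (adaptively, with erasures) in the \OT-hybrid model, that implements the reactive functionality describing the local computation of a single server with that server's ring-element state held in additive shares between \alice and \bob; it then outputs a two-party protocol in the \OT-hybrid model that UC-realizes $\fckt{\ringfamily}$ against active corruption, making only a black-box use of both ingredients.

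For the outer protocol I would use the black-box-ring MPC protocol of~\cite{CramerFIK03}, which tolerates a constant fraction of actively corrupted servers and makes a black-box use of an arbitrary finite ring. One first checks that the only ring-dependent operations a server performs are addition, multiplication, uniform sampling, and equality testing of ring elements -- exactly the four operations discussed above. Addition and sampling become local operations on the additive shares; equality tests in this outer protocol are of the ``type I'' form (one of \alice, \bob always knows both elements being compared), and so are performed locally; and each multiplication is reduced, via the two-instances reduction spelled out above, to two invocations of $\fmult^\ring$, which is realized by \pmultbasicx. Non-arithmetic server steps are handled by an arbitrary passive-secure protocol for boolean circuits in the \OT-hybrid model. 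Since \pmultbasicx makes a black-box use of \ringfamily and otherwise only invokes the \OT oracle, the resulting inner protocol -- and hence the composed protocol $\Pi$ -- makes a black-box use of \ringfamily and of \OT.

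Two points need care. First, \pmultbasicx as stated is secure only against \emph{static} passive corruption, whereas the compiler of~\cite{IshaiPrSa08} (as noted in the footnote on adaptive security above) requires the inner protocol to be adaptively secure with erasures. This is handled by applying the generic transformation of \appendixref{adaptive}, which upgrades \pmultbasicx (and the surrounding reactive inner functionality, which merely adds local share manipulations) to adaptive passive security with erasures at no asymptotic cost. Since the protocol of~\cite{CramerFIK03} is information-theoretically secure and may be taken to be adaptively secure without erasures, the compiler then yields a protocol that is statistically UC-secure against adaptive active corruption with erasures in the \OT-hybrid model, in computationally unbounded environments -- establishing item~(1) of the definition of secure black-box realization. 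Second, one verifies the efficiency requirement, item~(2): by the lemma of \sectionref{basic-stat} we may take $n = \log|\ring| + k + 1 \le |\ringid| + k + 1$, so each \fmult call costs $O(|\ringid|+k)$ ring operations, \OT calls, and communicated ring-element labels; the outer protocol contributes only a $\poly(k)$ multiplicative factor per gate of \ckt plus a $\poly(k)$ additive overhead for the boolean handling of non-arithmetic steps; hence the arithmetic computation and communication complexity of $\Pi^\ringfamily$ is bounded by a fixed polynomial in $k$ and $|\ringid|$, independently of \ringfamily.

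The main obstacle I anticipate is not any single calculation but the bookkeeping of interfaces: confirming that the outer protocol of~\cite{CramerFIK03} meets precisely the hypotheses the~\cite{IshaiPrSa08} compiler demands (a constant corruption threshold among the $k$ servers, black-box ring use, adaptive security without erasures, and the restriction of ring-element operations to the four listed kinds with equality tests of type I), and that the adaptive-with-erasures upgrade of \pmultbasicx composes correctly inside the reactive inner functionality. Given the modular, black-box nature of the compiler, once these interface conditions are checked, the security and complexity claims follow formally from the cited results together with the lemma on \pmultbasicx and the correctness of the reduction from per-gate multiplication to \fmult.
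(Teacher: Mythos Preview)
Your proposal is correct and follows essentially the same approach as the paper: instantiate the~\cite{IshaiPrSa08} compiler with the~\cite{CramerFIK03} outer protocol and the statistically secure product-sharing protocol \pmultbasicx (upgraded to adaptive-with-erasures via Appendix~\ref{app:adaptive}) for the inner protocol. The paper itself gives only a one-line justification before the theorem statement; your write-up is considerably more detailed in checking the interface conditions and the complexity bound, but the argument is the same.
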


The {\em arithmetic} communication complexity of the protocol \pmultbasicx,
and hence that of the above protocol,
grows linearly with (a bound on) $|\log\ringfamily_\ringid|$. (Recall that, by convention, the required upper bound is given by $|\ringid|$; otherwise such a bound can be inferred from the length of identifiers.)

\paragraph{Protocols from noisy encodings.}
To obtain a computationally secure protocol
whose arithmetic communication complexity is independent of the ring, we
shall depend on Assumption~\ref{asm:lincode-generic}, instantiated with the
code generation algorithm \randcodegen based on random linear codes.  By
replacing \pmultbasicx by \pmultlcx (with \randcodegen as the code
generation scheme) in the previous construction we obtain the following:
\begin{theorem}
Suppose that Assumption~\ref{asm:lincode-specific}(a) holds. Then, for every arithmetic circuit
\ckt, there exists a protocol $\Pi$ in the \OT-hybrid model that
is a secure black-box realization of  \ckt-evaluation
for the set of all computationally efficient field families \fieldfamily.
The security holds against adaptive
corruption with erasures.  Further, the arithmetic complexity of $\Pi$ is $\poly(k)\cdot|\ckt|$,
independently of \fieldfamily or $\ringid$.
\end{theorem}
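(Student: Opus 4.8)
The plan is to instantiate the general ``outer protocol / inner protocol'' compiler of~\cite{IshaiPrSa08} exactly as set up in the discussion above, with two concrete ingredients, and then pass the result through the adaptive-security transformation of Appendix~\ref{app:adaptive}. For the outer protocol I would use the black-box MPC protocol of~\cite{CramerFIK03} with $k$ servers: it applies to an arbitrary ring (hence in particular to any field), it is UC-secure against active corruption of a constant fraction of the servers, it is adaptively secure in the erasure-free model, and it uses the ring in a black-box way with $\poly(k)$ overhead per gate of \ckt. For the inner protocol I would realize the reactive ``server-computation'' functionality as already described: additions of additively-shared ring elements and sampling of a shared random ring element are done locally on the shares; equality tests are ``type~I'' computations performed locally by whichever party holds the elements being compared; each multiplication of two shared ring elements is reduced, via the two-call reduction to \fmult given above, to invocations of \fmult; and each such \fmult is realized in the semi-honest \OT-hybrid model by Protocol~\pmultlcx with the code generator \randcodegen. (Using \randcodegen is legitimate here precisely because it requires a field --- in particular for the derivation of $H$ --- which matches the quantification over field families in the statement.) Server operations that involve no ring elements are handled by an off-the-shelf semi-honest boolean protocol, e.g.\ GMW in the \OT-hybrid model.

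Next I would invoke the main theorem of~\cite{IshaiPrSa08}. Since its compiler makes only black-box use of the two components, and both components make only black-box use of the field (and of \OT), the composed two-party protocol $\Pi$ makes only black-box use of \fieldfamily and realizes $\fckt{\fieldfamily}$ in the \OT-hybrid model with UC-security against active corruption. Correctness is immediate from the correctness of the two-call \fmult reduction (giving correct additive shares of each product) together with correctness of the outer protocol. Security reduces to the passive security of the inner protocol; for the \fmult step this is exactly the lemma above establishing that, under Assumption~\ref{asm:lincode-generic} for \randcodegen with $t=1$ --- that is, under Assumption~\ref{asm:lincode-specific}(a) --- Protocol~\pmultlcx securely realizes \fmult against static passive corruption. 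Hence the environment's views in the real and ideal executions of $\Pi$ are computationally indistinguishable. Adaptive security with erasures then follows by applying the transformation of Appendix~\ref{app:adaptive} to the inner protocols before composition, just as in the unconditional construction, since the outer protocol is already adaptively secure without erasures.

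Finally I would bound the arithmetic complexity. Per gate of \ckt, the outer protocol performs $\poly(k)$ ring operations across the $k$ servers; each server multiplication triggers $O(1)$ calls to \fmult, and each call to \pmultlcx with \randcodegen (where $n=2k$, $\ell=k$) costs $O(k)$ field operations plus an $\binom{n}{k}$-\OT implemented by $n$ string-OTs --- all $\poly(k)$, and, crucially, independent of $|\fieldfamily_\ringid|$, since the parameters of \randcodegen depend only on $k=|\ringid|$ and not on the field size. Additions, sampling, and the type~I equality checks add only local $\poly(k)$ work. Summing over all gates gives arithmetic (communication and computation) complexity $\poly(k)\cdot|\ckt|$, independent of \fieldfamily and of \ringid, which is the second requirement for a secure black-box realization.

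I expect the genuinely delicate point to be not the complexity count but the adaptivity bookkeeping required by the~\cite{IshaiPrSa08} compiler: one must check that it indeed suffices for the inner protocol to be adaptively secure \emph{with} erasures while the outer protocol is adaptively secure \emph{without} erasures, and that the erasure-based transformation of Appendix~\ref{app:adaptive} upgrades \pmultlcx together with its surrounding local operations to adaptive security with erasures without disturbing their black-box use of the field. This is exactly the flavour of adaptivity the compiler consumes, but it is the step that deserves the most care (the conference version of~\cite{IshaiPrSa08} having misstated what the inner protocol must provide).
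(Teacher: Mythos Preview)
Your proposal is correct and follows exactly the route the paper takes: it obtains this theorem by replacing \pmultbasicx with \pmultlcx (instantiated with \randcodegen) in the unconditional construction, i.e., keeping the outer protocol of~\cite{CramerFIK03} and the~\cite{IshaiPrSa08} compiler, and upgrading the inner protocol to adaptive-with-erasures via Appendix~\ref{app:adaptive}. Your discussion of the adaptivity bookkeeping and of why the arithmetic complexity no longer depends on the field size is, if anything, more explicit than what the paper writes.
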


Using \ringcodegen instead of \randcodegen, this result extends to all
ring families for which  Assumption~\ref{asm:lincode-generic} holds
with \ringcodegen. Recall that we propose this assumption
for {\em all efficient} computational ring families \ringfamily.
\begin{theorem}
Suppose that Assumption~\ref{asm:lincode-specific}(b) holds. Then, for every arithmetic circuit
\ckt, there exists a protocol $\Pi$ in the \OT-hybrid model that
is a secure black-box realization of  \ckt-evaluation
for the set of all computationally efficient ring families \ringfamily.
The security holds against adaptive
corruption, with erasures.  Further, the arithmetic complexity of $\Pi$ is $\poly(k)\cdot|\ckt|$,
independently of \ringfamily or $\ringid$.
\end{theorem}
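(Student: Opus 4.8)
The plan is to instantiate the black-box compiler of~\cite{IshaiPrSa08} exactly as in the proof of the preceding theorem, but with the field-specific ingredients replaced by their ring analogues. Concretely, I would take as the \emph{outer} protocol the MPC protocol of~\cite{CramerFIK03}, which makes only a black-box use of an arbitrary finite ring and is UC-secure against active corruption of a constant fraction of its $k$ servers, and as the \emph{inner} product-sharing protocol I would use Protocol~\pmultlcx from Section~\ref{sec:basic-lincode} instantiated with the code-generation algorithm \ringcodegen. Section~\ref{sec:general} already establishes that the only ring-level operations a server in the outer protocol performs are addition, uniform sampling, multiplication, and equality testing: addition and sampling act locally on the additive shares, equality tests are of ``type~I'' (the tested value is held by a single party) and are performed locally, and each multiplication is reduced to two invocations of \fmult by the cross-term reduction spelled out in that section. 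Thus the sole cryptographic ingredient to supply is a passive-secure realization of \fmult over an arbitrary computationally efficient ring family, which is precisely what \pmultlcx with \ringcodegen provides.

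Second, I would check that the hypotheses of the lemma on Protocol~\pmultlcx are met. That lemma requires Assumption~\ref{asm:lincode-generic} to hold for the chosen code generator with $t=1$; for \ringcodegen this is exactly Assumption~\ref{asm:lincode-specific}(b), which is assumed. The one point deserving care --- and the reason the ring case is not merely cosmetic relative to the field case --- is that \ringcodegen must be well defined over an arbitrary ring with unity \emph{without ever inverting a ring element}: with $G=\left[\begin{smallmatrix}A\\B\end{smallmatrix}\right]$ for unit upper-triangular $A,B$, the submatrix $G|_L$ is again unit upper-triangular, so its inverse $H$ (also unit upper-triangular) is computed by back-substitution using only ring additions, subtractions and multiplications; hence $HG|_L=I$ and Protocol~\pmultlcx, which depends only on this identity, makes a black-box use of \ring. (The alternative ``opposite random walks in $\mathrm{SL}(n,\ring)$'' construction sketched in Section~\ref{sec:encoding-insts} could be substituted, and additionally covers rings without unity.)

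Third, I would assemble the pieces. Feeding the above outer and inner protocols into the compiler of~\cite{IshaiPrSa08} yields a protocol $\Pi$ in the \OT-hybrid model that UC-realizes $\fckt{\ringfamily}$ against active corruption; since the compiler is black-box in both ingredients and both ingredients are black-box in \ringfamily, so is $\Pi$, giving a secure black-box realization of \ckt-evaluation for the set of all computationally efficient ring families. For adaptive security with erasures, \pmultlcx falls in the class of protocols covered by the transformation of Appendix~\ref{app:adaptive}, so the inner protocol can be made adaptively secure with erasures; combined with the adaptive (erasure-free) security of the outer protocol, this is exactly the interface the~\cite{IshaiPrSa08} compiler needs to output an adaptively secure (with erasures) $\Pi$. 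For complexity: \pmultlcx uses $O(n)=O(k)$ ring operations and $O(k)$ string-OTs per instance, independently of $|\ringfamily_\ringid|$; the outer protocol of~\cite{CramerFIK03} incurs a $\poly(k)$ overhead in ring operations per gate of \ckt and invokes the inner protocol $\poly(k)$ times per gate; composing, the arithmetic complexity of $\Pi$ is $\poly(k)\cdot|\ckt|$, independent of \ringfamily and \ringid, which meets the efficiency clause in the definition of a secure black-box realization.

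I expect the only genuinely non-routine step to be the bookkeeping around adaptive security: one must verify that the erasure-based adaptive variant of \pmultlcx produced by Appendix~\ref{app:adaptive} remains black-box in the ring and still meets the compiler's requirements (inner protocol may use erasures, outer protocol may not), and that the reduction from the per-server multiplication functionality down to \fmult preserves adaptive security. Everything else is a direct substitution into the machinery already developed in Sections~\ref{sec:passive} and~\ref{sec:general}, mirroring the proof of the corresponding field theorem with \randcodegen replaced by \ringcodegen and~\cite{DamgardIs06} replaced by~\cite{CramerFIK03}.
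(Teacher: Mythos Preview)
Your proposal is correct and matches the paper's approach exactly: the paper obtains this theorem by taking the construction used for the unconditional result (outer protocol~\cite{CramerFIK03}, inner protocol built from the product-sharing subroutine, combined via~\cite{IshaiPrSa08} with the adaptive-with-erasures upgrade of Appendix~\ref{app:adaptive}) and swapping in \pmultlcx with \ringcodegen as the product-sharing protocol, relying on Assumption~\ref{asm:lincode-specific}(b). One small inaccuracy in your final paragraph: the preceding field theorem already uses~\cite{CramerFIK03} as the outer protocol (not~\cite{DamgardIs06}, which only enters for the Reed-Solomon--based efficient variant), so the only substitution relative to that theorem is \randcodegen $\mapsto$ \ringcodegen; this does not affect your argument.
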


Finally, our most efficient protocol will be obtained by using a variant of the protocol from \cite{DamgardIs06} as the outer
protocol (see Appendix~\ref{app-outer}) and an inner protocol which is based on \pmulttx
(with $n=O(k)$ and $t=\Omega(k)$). To get the specified computational complexity, the size of the field should be super-polynomial in the security parameter. (The communication complexity does not depend on this assumption.)
\begin{theorem}
Suppose that Assumption~\ref{asm:lincode-specific}(c) holds. Then, for every arithmetic circuit
\ckt, there exists a protocol $\Pi$ in the \OT-hybrid model
with the following properties.
The protocol $\Pi$ is a secure black-box realization of  \ckt-evaluation
for the set of all computationally efficient field families \fieldfamily, with respect to all computationally bounded environments for which $|\fieldfamily_\ringid|$ is super-polynomial in $k$.
The security of $\Pi$ holds against adaptive
corruption with erasures. The arithmetic communication complexity of $\Pi$ is
$O(|\ckt| + k\cdot {\mathsf{depth}}(\ckt))$,
where
${\mathsf{depth}}(\ckt)$ denotes the depth of \ckt, and its arithmetic computation complexity is $O(\log^2k)\cdot(|\ckt|+ k\cdot  \mathsf{depth}(\ckt))$. Its round complexity is $O(\mathsf{depth}(\ckt))$.
\end{theorem}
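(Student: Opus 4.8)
The plan is to obtain $\Pi$ by instantiating the black-box compiler of~\cite{IshaiPrSa08} with two ingredients: an \emph{outer protocol} given by the variant of~\cite{DamgardIs06} described in Appendix~\ref{app-outer}, and an \emph{inner protocol} built on top of Protocol~\pmulttx. As recalled in \sectionref{general}, the compiler takes a UC-secure outer protocol for $\ckt^\fieldfamily$ that uses $k$ auxiliary servers and tolerates active corruption of a constant fraction of them, together with a \emph{passive}-secure inner protocol (in the \OT-hybrid model) realizing the reactive functionality that describes the local computation of a single server, and produces a UC-secure two-party protocol against active corruption in the \OT-hybrid model. Because the compilation is black-box, if both ingredients make only a black-box use of the field and of \OT, so does $\Pi$; and the efficiency and adaptive-security features of $\Pi$ are inherited from those of the two ingredients. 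So the proof reduces to (i) pinning down the properties of the outer protocol, (ii) constructing the inner protocol from \pmulttx, and (iii) collecting the resulting complexities.

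For (i) I would verify in Appendix~\ref{app-outer} that the~\cite{DamgardIs06} variant makes a black-box use of any sufficiently large field, is UC-secure against active corruption of a constant fraction of the $k$ servers, and is adaptively secure \emph{without} erasures (as required by the compiler). Its key efficiency features, using packed Reed--Solomon secret sharing, are: amortized communication of $O(1)$ field elements per multiplication gate plus an additive $O(k\cdot\mathsf{depth}(\ckt))$ field elements; amortized local computation of $O(\log k)$ field operations per gate (dominated by Reed--Solomon encoding over the field); and round complexity $O(\mathsf{depth}(\ckt))$.

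For (ii), recall from \sectionref{general} that the only server operations touching ring elements are addition, sampling, equality testing, and multiplication; addition and sampling are done locally on additive shares, equality tests are ``type~I'' computations performed locally by one party, and a multiplication of secret-shared values reduces — by the two-call reduction of \sectionref{general} — to the product-sharing functionality \fmult. Since the~\cite{DamgardIs06} outer protocol processes a batch of $\Omega(k)$ multiplication gates per layer, the corresponding inner functionality needs $\Omega(k)$ parallel instances of \fmult; I would realize a full layer of these with two invocations of Protocol~\pmulttx (i.e.\ two calls to $\fmultt$ with $t=\Omega(k)$ and $n=O(k)$), which by the lemma for \pmulttx is secure against static passive corruption under Assumption~\ref{asm:lincode-specific}(c) and costs $O(t)$ field elements of communication and $O(k\log^2 k)$ field operations — that is, $O(1)$ communication and $O(\log^2 k)$ computation per product-sharing. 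To upgrade to adaptive passive security with erasures I would apply the general transformation of Appendix~\ref{app:adaptive}, which covers \pmulttx.

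For (iii), plugging the inner protocol into the outer one: each multiplication gate costs $O(1)$ amortized field elements of communication and $O(\log^2 k)$ amortized field operations, addition and sampling gates are free, equality tests are local, and the additive overheads of the outer protocol together with the once-per-layer batched inner invocations contribute the $O(k\cdot\mathsf{depth}(\ckt))$ term. This yields arithmetic communication complexity $O(|\ckt|+k\cdot\mathsf{depth}(\ckt))$, arithmetic computation complexity $O(\log^2 k)\cdot(|\ckt|+k\cdot\mathsf{depth}(\ckt))$, and round complexity $O(\mathsf{depth}(\ckt))$; UC security against adaptive corruption with erasures and the black-box use of the field and of \OT follow from the corresponding properties of the two ingredients and the black-box nature of the compiler. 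The requirement that $|\fieldfamily_\ringid|$ be super-polynomial in $k$ enters only in the computation bound — it guarantees that the packed Reed--Solomon encoding with the needed parameters is available and that the $O(k\log^2 k)$ polynomial-arithmetic cost dominates — and it is consistent with the large-field regime in which Assumption~\ref{asm:lincode-specific}(c) is posed. The step I expect to be the main obstacle is the ``impedance matching'' between the outer and inner protocols: one must check that the batched server computation of the~\cite{DamgardIs06} variant is exactly of the form handled by the two-call reduction to a single packed \fmult instance, that the packing factor of \pmulttx can be aligned with the batch size of the outer protocol, and that packing together with the adaptive-with-erasures transformation composes inside the~\cite{IshaiPrSa08} framework without inflating the additive $O(k\cdot\mathsf{depth}(\ckt))$ term.
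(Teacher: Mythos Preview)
Your proposal is correct and follows essentially the same route as the paper: instantiate the compiler of~\cite{IshaiPrSa08} with the~\cite{DamgardIs06} variant of Appendix~\ref{app-outer} as the outer protocol and an inner protocol built from \pmulttx (with $n=O(k)$, $t=\Omega(k)$), upgraded to adaptive-with-erasures via Appendix~\ref{app:adaptive}. One small inaccuracy: the super-polynomial field size is not used \emph{only} for the computation bound — it is also what makes the $1/|\field|$ soundness error of the linear-membership proofs in the outer protocol negligible (Appendix~\ref{app-outer} states this assumption up front and notes it can be removed at a minor cost).
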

By using a suitable choice of fields and evaluation points for the Reed-Solomon encoding (see Section~\ref{sec:basic-packed}), and under a corresponding specialization of Assumption~\ref{asm:lincode-specific}(c), the computational overhead of the above protocol can be reduced from $O(\log^2k)$ to $O(\log k)$. (In this variant we do not attempt to make a black-box use of the underlying field and rely on the standard representation of field elements.)

\paragraph{Protocols from homomorphic encryption.}
We also consider protocols which make a black-box\footnote{Here and in the following, when saying that a construction makes a black-box use of a homomorphic encryption primitive we refer to the notion of a fully black-box reduction, as defined in~\cite{ReingoldTrVa04}. This roughly means that not only does the construction make a black-box use of the primitive, but also its security is proved via a black-box reduction.
}
 use of homomorphic encryption. These are obtained in a manner similar to above, but using protocols $\theta$ and $\psi$ as the inner protocols and~\cite{CramerFIK03} as the outer protocol. Using these we obtain the following theorems:
\begin{theorem}
For every arithmetic circuit
\ckt, there exists a protocol $\Pi$ in the \OT-hybrid model, such that for every ring family $\ringfamily$, the protocol $\Pi^\ringfamily$ securely realizes ${\cal F}_C^\ringfamily$ by making a {\em black-box} use of any controlled-ring homomorphic encryption for \ringfamily.
The security holds against adaptive
corruption with erasures.
The number of invocations of the encryption scheme is $\poly(k)\cdot|\ckt|$, independently of \ringfamily or $\ringid$.
\end{theorem}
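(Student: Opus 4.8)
The plan is to obtain $\Pi$ by instantiating the black-box compiler of~\cite{IshaiPrSa08} with the black-box-ring MPC protocol of~\cite{CramerFIK03} as the ``outer'' protocol and an ``inner'' two-party protocol built around Protocol~$\theta$. Recall that~\cite{IshaiPrSa08} produces a UC-secure protocol in the \OT-hybrid model against active corruption from (i) an outer protocol for $\fckt{\ringfamily}$ that uses $k$ auxiliary servers and remains secure under active corruption of a constant fraction of them, and (ii) an inner protocol, secure only against \emph{passive} corruption, that emulates the reactive functionality given by a single server's local computation, with the server's state additively secret-shared between \alice and \bob. Because the compiler is black-box in both ingredients, it suffices that the outer protocol use \ringfamily only as a black box and that the inner protocol use \ringfamily and the homomorphic encryption scheme only as black boxes. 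For the outer protocol I would take the protocol of~\cite{CramerFIK03}: it realizes an arbitrary arithmetic circuit over a black-box ring in the honest-majority setting, so instantiated with $k$ servers it tolerates the constant fraction of active server corruptions required by~\cite{IshaiPrSa08}, and its ring-operation count per gate does not grow with $|\ringfamily_\ringid|$.

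Next I would construct the inner protocol following Section~\ref{sec:general}. The only ring-dependent operations a server performs are addition, uniform sampling, equality testing, and multiplication; any operation not touching a ring element is carried out on bit-string identifiers by an off-the-shelf passive-secure boolean protocol (e.g.\ GMW in the \OT-hybrid model). Additions and samplings act locally on the additive shares. Equality tests in~\cite{CramerFIK03} are of the special ``type~I'' form in which the tested value is held in the clear by one party, so the comparison is done locally by that party. Multiplication of two shared ring elements is reduced, exactly as in Section~\ref{sec:general}, to two invocations of \fmult on cross terms plus local additions, and each \fmult instance is realized by Protocol~$\theta$, which makes a black-box use of the controlled-ring homomorphic encryption scheme corresponding to \ringfamily and is passive-secure. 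Composing these pieces yields a passive-secure realization of the reactive inner functionality that uses \ringfamily as a black box (for the local share manipulations) and the encryption scheme as a black box (inside $\theta$, via $O(1)$ calls to $G,E,C,D$ per \fmult instance).

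Finally I would assemble the two ingredients, track security, and count cost. The compiler of~\cite{IshaiPrSa08} requires the outer protocol to be adaptively secure \emph{without} erasures and the inner protocol to be adaptively secure \emph{with} erasures; the information-theoretic protocol of~\cite{CramerFIK03} can be taken to meet the former (as is standard for protocols with straight-line simulators), while Protocol~$\theta$ together with the boolean sub-protocol and the \fmult-to-multiplication reduction is upgraded to adaptive security with erasures by the general transformation of Appendix~\ref{app:adaptive}. Plugging these into~\cite{IshaiPrSa08} gives a UC-secure realization of $\fckt{\ringfamily}$ against active corruption in the \OT-hybrid model, secure against adaptive corruption with erasures, making only black-box use of \ringfamily and of any controlled-ring homomorphic encryption for \ringfamily. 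For efficiency, the outer protocol on a size-$|\ckt|$ circuit performs $\poly(k)\cdot|\ckt|$ ring operations in total, the inner protocol turns each server multiplication into $O(1)$ \fmult calls, and each call of Protocol~$\theta$ costs $O(1)$ invocations of the encryption scheme (key generation $G$ is run once and amortized away); hence the total number of encryption-scheme invocations is $\poly(k)\cdot|\ckt|$, independent of \ringfamily or \ringid. The step I expect to be the main obstacle is the adaptive-security bookkeeping: verifying that the erasure-based transformation of Appendix~\ref{app:adaptive} genuinely applies to Protocol~$\theta$ and to the boolean sub-protocols as used here, and that~\cite{CramerFIK03} is adaptively secure in the no-erasure model — this is precisely the point where one must avoid the incorrect ``static security suffices'' claim of the conference version of~\cite{IshaiPrSa08}.
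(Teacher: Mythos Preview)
Your proposal is correct and matches the paper's own argument: it too instantiates the compiler of~\cite{IshaiPrSa08} with~\cite{CramerFIK03} as the outer protocol and Protocol~$\theta$ (upgraded via the Appendix~\ref{app:adaptive} transformation) as the inner protocol, handling the server's ring operations exactly as you describe. Your discussion is in fact more detailed than the paper's, which simply notes that the theorem is ``obtained in a manner similar to above, but using protocols $\theta$ and $\psi$ as the inner protocols and~\cite{CramerFIK03} as the outer protocol.''
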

Note that the above theorem can be instantiated with the ring of $n$ by $n$ matrices over $\mathbb{Z}_p$,
and the communication complexity of the resulting protocol would be $\poly(k)\cdot|\ckt|\cdot n^2$. Combined with~\cite{MohasselW08}, this yields constant-round protocols for secure linear algebra which make a black-box use of homomorphic encryption and whose communication complexity is nearly linear in the input size.

For the case of fields, we obtain the following more efficient version of the result by using the efficient outer protocol from Appendix~\ref{app-outer}:
\begin{theorem}
For every arithmetic circuit
\ckt, there exists a protocol $\Pi$ in the \OT-hybrid model, such that for every field family $\fieldfamily$, the protocol $\Pi^\fieldfamily$ securely realizes ${\cal F}_C^\ringfamily$ by making a {\em black-box} use of any controlled-ring homomorphic encryption for \fieldfamily.
The security holds against adaptive
corruption with erasures.
Further, $\Pi$ makes $O(|\ckt| + k\cdot {\mathsf{depth}}(\ckt))$ invocations of the encryption scheme, and the communication complexity is dominated by sending $O(|\ckt| + k\cdot {\mathsf{depth}}(\ckt))$ ciphertexts.
\end{theorem}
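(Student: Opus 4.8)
The plan is to instantiate the general compiler of~\cite{IshaiPrSa08} exactly as in the proofs of the preceding theorems, but now pairing the \emph{efficient} ``outer protocol'' with the homomorphic‑encryption‑based ``inner protocol'' $\theta$ of \sectionref{homo}. For the outer protocol I would take the variant of the protocol of Damg{\aa}rd and Ishai described in Appendix~\ref{app-outer}: a UC‑secure protocol for $\fckt{\fieldfamily}$ that uses $k$ auxiliary servers, tolerates active corruption of a constant fraction of the servers, is adaptively secure in the no‑erasure model (as required by~\cite{IshaiPrSa08}), makes only a black‑box use of the field, stores every field element held by a server as an additive share between Alice and Bob, and whose arithmetic communication and computation complexity is $O(|\ckt| + k\cdot\mathsf{depth}(\ckt))$ (the additive term coming from the packed‑secret‑sharing layers). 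For the inner protocol I would use Protocol~$\theta$, which realizes $\fmult^{\fieldfamily_\ringid}$ against passive corruption by making a black‑box use of any controlled‑ring homomorphic encryption for $\fieldfamily$; by the transformation of Appendix~\ref{app:adaptive} I may assume it is adaptively secure with erasures.

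Next I would spell out how the reactive inner functionality — the distributed version of a single server's local computation — is implemented, following \sectionref{general}. A server performs only four kinds of operations on field elements (addition, multiplication, uniform sampling, equality test), plus arbitrary boolean operations on non‑field data. Additions and samplings act locally on the additive shares; boolean operations are delegated to an off‑the‑shelf passive‑secure protocol for boolean circuits in the \OT‑hybrid model (e.g.\ \cite{GoldreichMiWi87}); equality tests are ``type~I'' computations in the terminology of~\cite{IshaiPrSa08}, performed locally by whichever of Alice or Bob holds the value being compared; and each multiplication is reduced, via the two‑$\fmult$‑call reduction of \sectionref{general}, to two invocations of Protocol~$\theta$. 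Feeding these components into the compiler of~\cite{IshaiPrSa08} produces $\Pi$. Since that compiler is black‑box, the outer protocol is black‑box in the field, and Protocol~$\theta$ is black‑box in the homomorphic encryption scheme, the composed protocol $\Pi^\fieldfamily$ makes only a (fully) black‑box use of the encryption scheme in the sense of~\cite{ReingoldTrVa04}; UC‑security against adaptive corruption with erasures follows from the corresponding guarantee of~\cite{IshaiPrSa08}, given that the outer protocol is adaptively secure without erasures and the inner ingredients are adaptively secure with erasures.

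For the efficiency claim I would appeal to the complexity of the outer protocol established in Appendix~\ref{app-outer}: it performs $O(|\ckt| + k\cdot\mathsf{depth}(\ckt))$ ring operations, of which $O(|\ckt| + k\cdot\mathsf{depth}(\ckt))$ are multiplications, and its own communication is $O(|\ckt| + k\cdot\mathsf{depth}(\ckt))$ field elements. The compiler turns each server ring operation into a local computation on shares, or, for a multiplication, into two runs of the inner $\fmult$ protocol; each run of Protocol~$\theta$ uses $O(1)$ calls to the encryption, decryption and homomorphic‑evaluation algorithms and communicates $O(1)$ ciphertexts. Hence the total number of invocations of the encryption scheme is $O(|\ckt| + k\cdot\mathsf{depth}(\ckt))$ and the communication is dominated by that many ciphertexts; the residual communication — sharing inputs and reconstructing outputs between Alice and Bob, the cost‑free routing of the already‑shared server‑to‑server messages, the watchlist channels, and the boolean subprotocols for non‑arithmetic server steps — contributes only $O(|\ckt| + k\cdot\mathsf{depth}(\ckt))$ field elements and bits, which is absorbed. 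The round complexity inherits the $O(\mathsf{depth}(\ckt))$ bound of the outer protocol.

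The main obstacle is not in this composition, which is routine once the pieces are in place, but in justifying the properties asserted for the efficient outer protocol: that the variant of~\cite{DamgardIs06} in Appendix~\ref{app-outer} can be made UC‑secure and adaptively secure (without erasures) with a constant \emph{amortized} arithmetic overhead per gate, while keeping each server's ring‑dependent operations confined to the four admissible ones and ensuring that every equality test is on a value known to a single party (so the generic boolean‑comparison subprotocol is never invoked). That analysis is exactly what Appendix~\ref{app-outer} must supply; everything in the present theorem then follows by black‑box composition with Protocol~$\theta$.
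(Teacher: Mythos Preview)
Your proposal is correct and follows essentially the same approach as the paper: instantiate the compiler of~\cite{IshaiPrSa08} with the efficient outer protocol of Appendix~\ref{app-outer} and the inner protocol built from Protocol~$\theta$ (upgraded to adaptive security with erasures via Appendix~\ref{app:adaptive}), then read off the complexity from the $O(|\ckt|+k\cdot\mathsf{depth}(\ckt))$ bound on the outer protocol and the constant per-call cost of~$\theta$. The paper itself states this theorem without a separate proof, merely pointing to the efficient outer protocol; your write-up is in fact more detailed than what the paper provides.
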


We also obtain analogous results for uncontrolled-ring homomorphic encryption:
\begin{theorem}
For every arithmetic circuit \ckt
there exists a black-box construction of a protocol $\Pi$ in the \OT-hybrid model from any uncontrolled-ring homomorphic encryption for the standard representation of the ring family $\mathbb{Z}_{M}$, such that $\Pi$ is a secure realization of  \ckt-evaluation for the same ring family under the standard representation.
The security holds against adaptive
corruption with erasures.
The number of invocations of the encryption scheme is $\poly(k)\cdot|\ckt|$, independently of $\ringid$, and the communication complexity is dominated by $\poly(k)\cdot|\ckt|$ ciphertexts.  During the protocol, the ring size parameter fed to the encryption scheme by honest parties is limited to $k'=O(k+|\ringid|)$.

If, further, the ring over which $C$ should be computed is restricted to be a field, there exists a protocol as above which makes $O(|\ckt| + k\cdot {\mathsf{depth}}(\ckt))$ invocations of the encryption scheme, and where the communication complexity is dominated by sending $O(|\ckt| + k\cdot {\mathsf{depth}}(\ckt))$ ciphertexts.
\end{theorem}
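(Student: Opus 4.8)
The plan is to derive $\Pi$ by the recipe already outlined in Section~\ref{sec:general}: instantiate the black-box compiler of~\cite{IshaiPrSa08} with the outer MPC protocol of~\cite{CramerFIK03}, run over $\mathbb{Z}_M$ in its standard representation, together with an inner two-party protocol whose only cryptographic ingredient is Protocol~$\psi$. Recall that the compiler needs (i) an outer protocol for evaluating \ckt over $\mathbb{Z}_M$ using $k$ servers, UC-secure against active corruption of a constant fraction of the servers, and (ii) a passive-secure inner protocol in the \OT-hybrid model for the reactive functionality describing a single server's local computation, with its ring-valued state held in additive shares over $\mathbb{Z}_M$ between Alice and Bob; it then outputs a UC-secure two-party protocol in the \OT-hybrid model. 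As discussed in Section~\ref{sec:general}, the server operations decompose into additions and samplings of shared elements (done locally on the shares), equality tests (``type~I'' operations performed locally by the party holding the operands in the clear), non-ring-valued steps (handled by GMW over boolean circuits in the \OT-hybrid model), and multiplications of two shared elements, each of which reduces by the standard two-call reduction to $O(1)$ invocations of $\fmult^{\mathbb{Z}_M}$. So the inner protocol is pinned down once we supply a passive-secure realization of $\fmult^{\mathbb{Z}_M}$, and Protocol~$\psi$ is exactly such a realization, built in a fully black-box way from an uncontrolled-ring homomorphic encryption scheme for the standard representation of $\mathbb{Z}_M$.

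First I would record the correctness and static passive security of Protocol~$\psi$, already argued above: for fixed $a,b,r\in\mathbb{Z}_M$ the plaintext $ab+r+sM$ handed back to Alice lies within statistical distance $2^{-k}$ of $w+sM$, where $w=ab+r\bmod M$ and $s$ is uniform on $\mathbb{Z}_{2(2^kM)}$, so Alice learns nothing beyond $w$, while Bob learns nothing by semantic security; the reduction to semantic security is black-box. I would also note that honest parties invoke the key generator only on $1^{k'}$ with $k'=\lceil 2\log M\rceil+2+k=O(k+|\ringid|)$, and that the guarantee $|\ringfamily_\ringid|>2^{k'}$ in the definition of uncontrolled-ring homomorphic encryption automatically supplies a modulus $N>4(2^k M^2)$, so no control over the modulus is required. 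Next, since the compiler of~\cite{IshaiPrSa08} requires the inner protocol to be secure against \emph{adaptive} passive corruption \emph{with erasures}, I would apply the generic transformation of Appendix~\ref{app:adaptive} to Protocol~$\psi$; that transformation is designed to cover the class of protocols to which all our product-sharing protocols, $\psi$ included, belong. Plugging the resulting adaptively-secure inner protocol and~\cite{CramerFIK03} into the compiler gives $\Pi$, and the black-box security theorem of~\cite{IshaiPrSa08} yields that $\Pi$ UC-realizes \ckt-evaluation over $\mathbb{Z}_M$ in the standard representation against adaptive corruption with erasures, using only a black-box use of the homomorphic encryption scheme.

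For the complexity, each of the $k$ servers in~\cite{CramerFIK03} performs $\poly(k)$ ring operations per gate of \ckt; each multiplication among these costs $O(1)$ calls to $\fmult^{\mathbb{Z}_M}$, and each such call costs $O(1)$ homomorphic operations and $O(1)$ communicated ciphertexts under $\psi$, so in total $\Pi$ makes $\poly(k)\cdot|\ckt|$ invocations of the encryption scheme and communicates $\poly(k)\cdot|\ckt|$ ciphertexts, which dominate the remaining (boolean and \OT) communication. The $2^{-k}$ statistical error of each $\psi$-instance and the error introduced by the adaptivity transformation accumulate over only $\poly(k)\cdot|\ckt|$ instances, and since $|\ckt|$ is polynomial for every PPT environment the total stays negligible. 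The field variant follows by the same argument with~\cite{CramerFIK03} replaced by the efficient field outer protocol of Appendix~\ref{app-outer} (a variant of~\cite{DamgardIs06}), whose servers perform an amortized \emph{constant} number of ring multiplications per multiplication gate plus an additive $O(k\cdot\mathsf{depth}(\ckt))$ overhead; since each ring multiplication again costs $O(1)$ ciphertexts via $\psi$, this yields $O(|\ckt|+k\cdot\mathsf{depth}(\ckt))$ encryption invocations and communicated ciphertexts, with round complexity inherited from the outer protocol.

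The point that requires the most care is the interface between Protocol~$\psi$ and the compiler. On one side, $\psi$ is the single protocol here that genuinely relies on the \emph{standard representation} of $\mathbb{Z}_M$, so one must check this is compatible with the compiler's requirement that all ring-valued server state live in additive shares over $\mathbb{Z}_M$: it is, because additive shares of standard-representation elements are again standard-representation elements (reduced modulo $M$), so the values Alice and Bob feed into $\psi$ are always legitimate inputs. On the other side, one must actually perform the adaptivity-with-erasures upgrade rather than settle for static inner security: as already noted, the conference version of~\cite{IshaiPrSa08} mistakenly claimed that static inner security suffices for the compiler, so a rigorous argument must upgrade $\psi$ via Appendix~\ref{app:adaptive}, and the one nontrivial verification there is that $\psi$ falls in the covered class --- its messages are computed from its input, its randomness and the messages it has received, and all of its internal randomness can be safely erased once its output is determined.
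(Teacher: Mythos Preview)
Your proposal is correct and follows essentially the same approach as the paper: instantiate the compiler of~\cite{IshaiPrSa08} with the outer protocol of~\cite{CramerFIK03} (or the Appendix~\ref{app-outer} variant of~\cite{DamgardIs06} for the field case) and Protocol~$\psi$ as the inner product-sharing protocol, upgraded to adaptive-with-erasures security via Appendix~\ref{app:adaptive}. One small inaccuracy: the criterion for the Appendix~\ref{app:adaptive} transformation is not the erasability condition you describe but rather the ``special simulation'' property of Definition~\ref{def:splsimsec}; the paper establishes this directly for $\psi$ in the lemma at the end of that appendix, so your conclusion is unaffected.
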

The efficient version of the above theorem also applies to the case of arithmetic computation over pseudo-fields, in scenarios where it is computationally hard to find zero divisors. Furthermore, it can be generalized to the ring of $n$ by $n$ matrices, which when used with constructions of uncontrolled-ring $\mathbb{Z}_N$-homomorphic encryption schemes from the literature~\cite{Paillier99,DamgardJ02} would yield arithmetic protocols for matrices over large rings whose complexity grows quadratically with $n$.

We finally note that in the {\em stand-alone} model, the OT oracle in the above protocols can be realized by making a black-box use of the homomorphic encryption primitive without affecting the asymptotic number of calls to the primitive. This relies on the black-box construction from~\cite{IshaiKLP06} and the fact that only $O(k)$ OTs need to be secure against active corruption. 
Thus, the above theorems hold also in the plain, stand-alone model (as opposed to the OT-hybrid UC-model), assuming that the underlying ring has identity.\footnote{The identity element is used in the standard construction of semi-honest OT from homomorphic encryption.}

\superparagraphb{Acknowledgments.} We thank Jens Groth, Farzad Parvaresh, Oded Regev, and Ronny Roth for helpful discussions.

\bibliographystyle{alpha}
\bibliography{bib}

\newpage
\appendix
\section{Security Against Adaptive Passive Corruption, with Erasures}
\label{app:adaptive}
Here we present a general transformation, that
applies to a class of protocols covering all our protocols from Section~\ref{sec:passive}, to obtain
protocols that are secure against \emph{adaptive} passive corruption, in the model with erasures.  This transformation is done in a simple way using standard techniques.  At a high level, the idea is simply to call our basic protocol on randomly chosen inputs, erase the ``local computations'' done while executing the basic protocol, and then communicate ``corrections'' in order to convert the outputs of the random execution into the desired outputs for the real inputs (in a manner very similar to Beaver's reduction of OT to random OT~\cite{Beaver95}).  Intuitively, in our new protocol, if an adversary adaptively corrupts a party during the initial random invocation of the basic protocol, there is no problem since the protocol was anyway run on random inputs chosen independently of the parties' actual inputs (although this is not quite accurate, which is why we introduce a notion of ``special simulation'' below).  On the other hand, if the adversary corrupts a party after the basic protocol is done, then since the party has already erased the local computations of the protocol, we are free to choose a ``random-looking'' output from the basic protocol in such a way that we can use it to explain the actual inputs and outputs that we have.

The main protocol in this section, \pmult has security against passive adaptive corruption, with erasures. \pmult is built using any
protocol \pmultx with a simpler security property described below.  Applying the transformation in this section also has another efficiency advantage in scenarios where pre-processing interaction is possible, and this is discussed briefly in a remark at the end of this section.

\subsection{Special Simulation Security Against Passive Corruption}
It will be convenient for us to introduce an intermediate notion of
security of multi-party computation against {\em static} passive
corruption, which will then enable us to obtain security against
{\em adaptive} passive corruption with erasures. This intermediate
security property is quite weak, and is required to hold only
against random inputs (though the candidates we shall use later in
fact satisfy stronger security).

Let \Ffunc be a secure function evaluation
functionality. We use the following terminology.
\begin{itemize}
\item
An environment \env is said to be a {\em random-input environment} if it provides
independent random inputs (according to a specified distribution) to each
party.
\item
A simulator \simx is said to be a {\em special simulator} if it behaves as
follows:
\begin{enumerate}
\item \simx sends the corrupt parties' inputs to \Ffunc, and obtains the
outputs from \Ffunc.
\item \simx picks {\em random inputs for all the honest parties to be
simulated}. \simx also sets the random tapes of all the parties (corrupt
and honest). These choices are (jointly) indistinguishable from the uniform (or
specified) distribution, even given the input of the corrupt parties.
(However \simx can correlate these choices with the output obtained from
\Ffunc).
\item \simx ensures that on interacting with the simulated honest parties,
the corrupt parties will produce the {\em same outputs as given by \Ffunc}.
If this is not the case \simx will abort. Otherwise, it
reports the view of the corrupt parties in this execution to the
environment.
\end{enumerate}
\end{itemize}

\begin{definition}
\label{def:splsimsec}
A protocol \prot is said to {\em securely realize \Ffunc
on random inputs, against passive corruption, with special simulation}
if there exists a special simulator \simx such that for all random-input
environments \env and a static passive adversary \adv,
the real execution of the protocol \prot between the
parties is indistinguishable to \env, from an ideal execution of
the parties interacting with \Ffunc and \simx.
\end{definition}

\subsection{Special Simulation Security to Security Against Adaptive Corruption with Erasures}

Given a protocol \pmultx which securely realizes \fmult
on random inputs, against passive corruption, with special
simulation, below we show how to construct a protocol \pmult
with security against adaptive passive corruption, with erasures.

\begin{itemize}
\item {\bf Protocol \pmult.} \alice holds $a \in \ring$ and \bob holds
$b \in \ring$.
    \begin{enumerate}
    \item \alice picks $r^\alice\in\ring$ and \bob picks $r^\bob\in\ring$ at random.
    \item \alice and \bob run \pmultx with inputs $r^\alice$ and $r^\bob$
        respectively, and obtains outputs $s^\alice$ and $s^\bob$ respectively. Note
        that $s^\alice+s^\bob = r^\alice r^\bob$.
    \item \alice and \bob erase the memory used for \pmultx. (They retain the
        inputs and outputs, namely $(r^\alice,s^\alice)$ and $(r^\bob,s^\bob)$ respectively.)
    \item \alice sends $a-r^\alice$ to \bob, and \bob sends $b-r^\bob$ to \alice.
    \item \alice outputs $z^\alice := a(b-r^\bob) + s^\alice$; \bob outputs $z^\bob :=
        (a-r^\alice)r^\bob + s^\bob$. Note that $z^\alice+z^\bob = ab$.
    \end{enumerate}
\end{itemize}

We now show the following:
\begin{lemma}
\label{lem:adaptive} For any \pmultx which securely realizes \fmult
on random inputs, against passive corruption, with special
simulation, protocol \pmult is a secure realization of \fmult
against adaptive passive corruption with erasures. If the security
of \pmultx is statistical, so is that of \pmult.
\end{lemma}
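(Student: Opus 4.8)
The plan is to build a simulator $\simify{\simx}$ for $\pmult$ out of the special simulator $\simx$ guaranteed for $\pmultx$, and then to argue indistinguishability by a case analysis on \emph{when} the adversary corrupts a party: (i) before step~3 (the erasure step), and (ii) after step~3. The key observation, which makes the construction work, is that after step~3 the honest parties retain only $(r^\alice,s^\alice)$ (resp.\ $(r^\bob,s^\bob)$) together with the public ``correction'' messages $a-r^\alice$ and $b-r^\bob$; all other internal state of the $\pmultx$ execution has been erased. So if corruption happens after step~3, the simulator is free to choose a random-looking transcript of $\pmultx$ consistent with whatever $(r,s)$-pair it wishes, and the special-simulation property is exactly what lets it do so.

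First I would set up the simulator. $\simify{\simx}$ runs $\simx$ internally as a subroutine for the $\pmultx$ phase. When the environment delivers the real inputs $a,b$, note that in the ideal world $\simify{\simx}$ learns the corrupt party's input and the corrupt party's output $z$ from $\fmult$. It then needs to produce: a $\pmultx$-transcript, the correction message from the honest party to the corrupt party, and (upon late corruption) the honest party's retained pair and output. Concretely: $\simify{\simx}$ picks the corrupt party's ``random input'' $r$ for $\pmultx$ itself (it may take it to equal the corrupt party's actual input minus a uniformly random shift, matching how an honest party would behave — this is consistent because in the real protocol $r$ is uniform and independent of $a$/$b$); it picks the honest party's correction message uniformly at random (which is what it looks like in the real world, since the honest party's $r$ is uniform); and it then invokes $\simx$ on the residual functionality $\fmult$ on random inputs to obtain a simulated $\pmultx$-view and a consistent output-share for the corrupt party. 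The one delicate point is that $\simx$'s output-shares are for the \emph{random} inputs $r^\alice,r^\bob$, whereas in $\pmult$ the final output is shifted by the affine corrections in step~5; but since $z^\alice = a(b-r^\bob)+s^\alice$ and the correction $b-r^\bob$ is already fixed (and simulated), the map from $(s^\alice,s^\bob)$ to $(z^\alice,z^\bob)$ is an invertible affine bijection, so consistency of the $s$-shares with $\fmult$ on $(r^\alice,r^\bob)$ translates precisely into consistency of the $z$-shares with $\fmult$ on $(a,b)$. I would verify this bijection explicitly, as it is the algebraic heart of the reduction.

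Then I would do the case analysis. \emph{Static / early corruption}: if the adversary corrupts, say, Alice before step~3, then $\simify{\simx}$ just needs $\simx$'s guarantee that the simulated $\pmultx$-view on random inputs is indistinguishable from a real one; the subsequent messages in steps~4--5 are simulated as above, and indistinguishability follows by composing the special-simulation guarantee with the (perfect) correctness of the affine shift. \emph{Late corruption} (after step~3): here the honest party has erased everything from $\pmultx$ and retains only its $(r,s)$-pair; the simulator picks a uniform $r$-pair that (via the affine bijection) explains the already-revealed corrections and the output $z$ from $\fmult$, invokes $\simx$ in its ``explain'' mode to produce a plausible erased-and-then-reconstructed $\pmultx$-view — but since that view is erased, the simulator in fact only needs to exhibit the retained pair, which is trivially distributed correctly. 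The main obstacle I anticipate is the \emph{adaptive} bookkeeping: one must be careful that the joint distribution of (correction messages, retained pairs, later-revealed randomness) is simulatable in a way that does not commit the simulator too early — this is exactly why the intermediate notion of ``special simulation'' (Definition~\ref{def:splsimsec}) is phrased to let $\simx$ correlate \emph{all} chosen randomness with the functionality's output, and I would lean on that clause heavily. Finally, the statistical-security claim is immediate: every step of the reduction (the affine bijection, the uniformity of corrections, the invocation of $\simx$) preserves statistical closeness, so if $\pmultx$'s special simulation is statistical then so is $\pmult$'s simulation. $\QED$
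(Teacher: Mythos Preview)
Your proposal is correct and follows essentially the same approach as the paper's proof sketch. Both build the simulator by (i) choosing the honest party's correction message uniformly at random, (ii) using the affine relation $z^\alice = a\cdot(\text{correction}) + s^\alice$ to compute the $s$-share that must be fed to the inner special simulator $\simx$ as the output of $\fmultx$, (iii) invoking $\simx$ to produce the $\pmultx$-view for the initially corrupt party, and (iv) handling a post-erasure corruption of the second party by exhibiting only the retained pair $(r,s)$, chosen to be consistent with the already-fixed correction message and with $r^\alice r^\bob = s^\alice + s^\bob$. The paper frames the late-corruption indistinguishability via an explicit two-experiment reduction to the static security of $\pmultx$ on random inputs, while you phrase it via the ``affine bijection'' observation; these are the same argument.
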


\begin{sketch}
The interesting cases are when during the protocol initially \alice is corrupted
and later \bob is corrupted, or when initially \bob is corrupted and later
\alice is corrupted. (Recall that all corruptions are passive.)

Let \envad be an arbitrary environment which gives \alice and \bob inputs for
\pmult.
We will consider the case when \alice is corrupted initially and \bob may be
corrupted later. The other case is symmetric for this analysis.
Our
simulator \simu works as follows.

\begin{itemize}
\item \simu sends $a$ to \fmult and obtains $z^\alice$ from it.
\item \simu picks a random value $c\from\ring$ and sets $s^\alice :=
z^\alice - ac$, and also picks a random value $r^\alice\from\ring$.

\item Next \simu internally runs the special simulator \simx for \pmultx with
$r^\alice$ as input to \alice. \simx expects to interact with an instance of
\fmult, which, for clarity, we will denote by \fmultx.  \simx simulates \fmultx
by providing $s^\alice$ as the output for \alice.

\item If \envad instructs to corrupt \bob before this simulation finishes and the
erasure step (step 3) is simulated, then \simu obtains the simulated state
for \bob in \pmultx from \simx; then \simu constructs a state for \bob in \pmult by
combining this with $b$, the input to \bob (which is not used until step 4 of
the protocol), and reports this to \envad.

\item If \bob is still not corrupted at step 4, \simu uses the value $c$
as the simulated message from \bob to \alice in step 4. Note that $z^\alice
= ac+s^\alice$.

\item By this step \bob has already erased his state during the execution of
\pmultx. So if \bob is corrupted at any point after this, its state can be
explained by giving $(b,r^\bob,s^\bob)$: for this the simulator \simu will
obtain $b$ by corrupting \bob in the ideal world, and set $r^\bob := b-c$
and $s^\bob := r^\alice r^\bob - s^\alice$, where $r^\alice$ and $s^\alice$
are the input and output of \alice in the simulated execution of \pmultx.
Note that this pair $(r^\bob,s^\bob)$
is consistent with $(r^\alice,s^\alice)$ by the functionality \fmult.
\end{itemize}

The indistinguishability of simulation follows from the two
requirements on the simulator \simx for \pmultx: that it is a special simulator and that it
provides an indistinguishable simulation against static corruption (on
random inputs).

If \bob is corrupted before step~4, the simulated execution is
indistinguishable from the real execution.
To see this, firstly note that
\simx is given $s^\alice$ as the output from \fmultx, but this is
indeed a random element (because $z^\alice$ is random).
Then, \simx is guaranteed to set the random tape of \alice, as well as
\bob's input and random tape to be indistinguishable from uniformly random
choices.
So the simulated state of \alice and \bob are indistinguishable from the
real execution up to step~4. The simulated state of \bob is completed by
incorporating \bob's input $b$ (the state used in execution before step~4
being independent of $b$).

If \bob is corrupted after step~4, then we consider the following two
experiments, with an environment \envx which consists of the given environment
\envad as well as part of our simulator which picks $c$ (but does not get
$z^\alice$ or compute $s^\alice$).
The environment \envx provides $r^\alice$ as input to \alice and $r^\bob := b-c$ to \bob.
It outputs the bit output by \envad.
\begin{itemize}
\item[\real:]
\alice and \bob execute \pmultx on their inputs from \envx.
\item[\ideal:]
In the \ideal execution \fmultx gives a random pair $(s^\alice,s^\bob)$ such
that $s^\alice+s^\bob = r^\alice r^\bob$. \simx interacts with \envx
simulating the internal state of \alice.
\end{itemize}
By the security requirement on \simx, the two experiments are indistinguishable
to the environment \envx.
Further the \real experiment above is identical to the \real execution of \pmult
with \envad. To complete the proof we need to argue that the \ideal execution above (with \envx) is identical to
our \ideal execution (with \envad).
Note that in our description of
the simulation
$s^\alice := z^\alice-ac$, whereas in the \ideal execution with \envx,
$s^\alice$ is just
a random element.
However, though the environment \envx knows $a$ and $c$,
$z^\alice$ will be picked at random (by \fmult in our \ideal execution).
In other words, we could consider a modified \fmultx which receives $ac$ from the
environment \envx, then picks a random element $z^\alice$ and sets $s^\alice := z^\alice -ac$,
without altering the experiment. With this modification, our \ideal
execution (with \envad and \simu) is identical to the \ideal execution with \envx and
\simx.

\end{sketch}

In \sectionref{passive}, for the protocols \pmultx, \pmultlcx and \pmulttx
we showed security against static passive corruption (even for non-random
inputs). The simulators we used in these proofs are in fact special
simulators.  The same is easily seen to be true for protocols $\theta$ and
$\psi$ based on homomorphic encryption. Thus we have the following result.

\begin{lemma}
Protocols \pmultx, \pmultlcx, \pmulttx, $\theta$, and $\psi$
securely realize $\fmult$ (or \fmultt in the case of Protocol
\pmulttx), on random inputs, against passive corruption, with
special simulation.
\end{lemma}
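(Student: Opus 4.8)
The plan is to revisit the simulators already exhibited in the proofs of the lemmas of \sectionref{passive} and to check, protocol by protocol, that each of them --- after filling in the (easy) corruption case that those proofs deferred --- meets the three requirements in the definition of a special simulator (\definitionref{splsimsec}). Because those proofs established security against \emph{static} passive corruption for \emph{arbitrary} inputs, the ``on random inputs'' clause costs nothing, so the only new content is to verify the \emph{special} form of the simulators.

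First I would write down, for each of \pmultbasicx, \pmultlcx, \pmulttx, $\theta$ and $\psi$ and for each choice of corrupted party, the explicit simulator, and observe that it always has the following shape: (i) it forwards the corrupted party's input to \fmult (or \fmultt for \pmulttx) and receives that party's output share $z$; (ii) it picks a \emph{fresh} uniformly random input for the simulated honest party and sets the random tapes of both parties; (iii) it then runs the honest protocol on these choices. The reason (ii) and (iii) are compatible is that every one of these protocols contains a single ``one-time-pad'' object living on a random tape --- the vector $t$ in \pmultbasicx, the vector $x$ in \pmultlcx, the degree-$2(k-1)$ polynomial $\pc$ in \pmulttx, the masking element(s) $r$ (resp.\ $r,s$) in $\theta$ (resp.\ $\psi$) --- which the simulator pins down via a single affine relation involving $z$ and the fresh input, so that the protocol's algebraic correctness identity (e.g.\ $z^\bob = \sum_i w^{\sigma_i}_i = ab - \sum_i t_i$ for \pmultbasicx, $z^\bob = (Hw|_L)_1 = ab - x_1$ for \pmultlcx, $\pz = \pa\pb - \pc$ for \pmulttx) forces the corrupted party's output to be exactly $z$. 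Since $z$ is itself a uniformly random additive share of a product, pinning down the pad by this relation leaves it uniformly distributed; hence the inputs and tapes the simulator chooses are indistinguishable from uniform even given the corrupted party's input (requirement~2), statistically for \pmultbasicx, \pmultlcx, \pmulttx and up to the negligible statistical slack of the homomorphic operation $C$ for $\theta,\psi$; and the corrupted party's output matches $z$ with zero abort probability (negligible for $\theta,\psi$, from that same slack), which is requirement~3.

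The hard part will be the remaining requirement: that the simulated view be indistinguishable from the real one \emph{jointly with the honest party's output}, even though a special simulator runs the honest party on a fresh input while the environment still sees the honest party's real output delivered by \fmult. For the honest-Bob cases (those treated in \sectionref{passive}, where Alice is corrupt) nothing new is needed: the fresh input $\alpha$ given to the simulated Bob enters only Bob's single outgoing message, whose indistinguishability from the real message is precisely what \lemmaref{stathiding} (for \pmultbasicx) and Assumptions~\ref{asm:lincode-generic}/\ref{asm:lincode-specific} (for \pmultlcx, \pmulttx), or semantic security (for $\theta,\psi$), already give, and $\alpha$ affects no output. For the honest-Alice cases --- which \sectionref{passive} dismissed as ``easy''/``perfect'' --- I would verify that, conditioned on the pair of output shares, the messages the corrupted Bob actually receives, namely the OT answers $(w_i)_{i\in L}$ in \pmultbasicx, \pmultlcx, \pmulttx and the ciphertext $E(pk,a)$ in $\theta,\psi$, are in the \emph{real} execution uniformly distributed over exactly the set from which the special simulator samples them. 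This in turn follows from the invertibility of $G|_L$ in \pmultlcx (so Alice's uniform $x$ induces a uniform $(w_i)_{i\in L}$ on the affine fibre fixed by the output share), from the corresponding interpolation property of Reed--Solomon codes in \pmulttx ($2k-1$ evaluations determine a degree-$2(k-1)$ polynomial), from the full randomness of the pad $t$ in \pmultbasicx, and from semantic security in $\theta,\psi$; in each case the upshot is that using the fresh input in place of Alice's real input is invisible to the environment. Assembling these checks for all five protocols yields the lemma.
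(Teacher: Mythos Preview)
Your proposal is correct and follows the same approach as the paper, which disposes of the lemma in two sentences by pointing back to the simulators already exhibited in \sectionref{passive} and observing that they are in fact special simulators (and that the same is easily seen for $\theta$ and $\psi$). Your write-up is a faithful, considerably more detailed unpacking of that observation --- in particular you spell out the ``easy'' honest-Alice/corrupt-Bob cases and check requirements~2 and~3 of \definitionref{splsimsec} explicitly --- but it is the same argument, not a different route.
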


Hence, by plugging them into the protocol \pmult we obtain
corresponding protocols which are secure against passive, adaptive
corruption with erasures.

\paragraph{Remark.} The structure of the protocol in this section allows
an efficiency gain by employing pre-processing. The first half of the
protocol, which is executed on random inputs can be carried out before the
actual function evaluation starts. Further, when used to implement a
reactive functionality, the entire set of steps  involving \OT that will be
ever used in the lifetime of the protocol can be carried out up front. In
fact, later in applying our protocols as the ``inner protocol'' of the final
construction, we can use this reactive variant.

\section{Extension to Multi-Party Computation}
\label{app-multiparty}

In this section, we briefly sketch what is involved in extending our results to the multi-party case.

The protocol in \cite{IshaiPrSa08} extends to more than two parties,
given inner and outer protocols for that many parties. The outer protocols
from \cite{DamgardIs06} and \cite{CramerFIK03} do extend to the multi-party
setting (called the ``multi-client'' setting in~\cite{IshaiPrSa08} for more details).  Hence by extending our inner protocol to the multi-party
setting, all our results extend similarly.

In the general multi-party case the only non-trivial kind of computations carried out by the servers
in the outer protocol is as follows:
\begin{itemize}
\item Each party $P_i$ ($i=1,\ldots,m$) sends $x_i$ and $y_i$ to the server.
\item The server computes random values $c_i$ such that $\Sigma_i c_i = (\Sigma_i x_i)(\Sigma_i y_i)$. Each party $P_i$ is given $c_i$ as the output.
\end{itemize}

A protocol for this using \fmult is as follows:
\begin{itemize}
\item For each ordered pair $(i,j)$, $i\not=j$, parties $P_i$ and $P_j$ engage in an instance of \fmult with inputs $(x_i,y_j)$ and obtain outputs $(\alpha^{(i,j)}_i,\alpha^{(i,j)}_j)$, respectively, where $\alpha^{(i,j)}_i+\alpha^{(i,j)}_j = x_i y_j$.
\item $P_i$ outputs $x_iy_i + \Sigma_{j\not=i} \left( \alpha^{(i,j)}_i + \alpha^{(j,i)}_i \right)$.
\end{itemize}

The correctness of this protocol, and its (perfect) privacy against passive corruptions, is standard and analogous to the binary case from~\cite{GoldreichMiWi87}.

\section{An Efficient Outer MPC Protocol}
\label{app-outer}

In this section, which is adapted from a preliminary full version of~\cite{IshaiPrSa08}, we describe a variant of the protocol from~\cite{DamgardIs06} which we use as the efficient
outer MPC protocol in our constructions. We restrict the attention to the case of black-box fields (alternatively, pseudo-fields), and assume that the field size is super-polynomial in the security parameter. (This assumption can be removed at a minor cost to the arithmetic complexity.)

The protocol involves $n$ servers
and $m$ clients ($m=2$ by default),
where only clients have inputs and outputs.
The protocol is statistically UC-secure against an adaptive adversary corrupting an arbitrary number of clients and some constant fraction of the servers. We note that unlike the protocol from~\cite{DamgardIs06}, here we do not need to guarantee output delivery and may settle for the weaker notion of ``security with abort''. This makes the protocol simpler, as it effectively means that whenever an inconsistency is detected by an honest party, this party can broadcast a complaint which makes the protocol abort.

For simplicity we assume that all $n+m$ parties in the MPC protocol have common access to an oracle which broadcasts random field elements, and do not count these elements towards the communication complexity. In~\cite{DamgardIs06} this is emulated via a distributed coin-flipping protocol and an $\epsilon$-biased generator~\cite{NaorNa90}, which reduce the communication cost of implementing this procedure. Alternatively, random field elements can be directly generated by the $m$ clients in the final protocol via efficient coin-flipping in the OT-hybrid model.

Before describing the protocol, we summarize its main efficiency features. For simplicity we shall restrict ourselves to $n=O(k)$, where k is a statistical security parameter, and a constant number of clients $m$. To
evaluate an arithmetic circuit $C$ of size $s$ and multiplicative depth $d$, the arithmetic communication complexity is $O(s+kd)$.\footnote{While we do not attempt here to optimize the additive $O(kd)$ term, we note that a careful implementation of the protocol seems to make this term small enough for practical purposes. In particular, the dependence of this term on $d$ can be eliminated for typical circuits.}
Assuming broadcast as an atomic primitive, the protocol requires $O(d)$
rounds. (We note that in the final $m$-party protocol obtained via the technique of~\cite{IshaiPrSa08}, broadcast only needs to be performed among the clients; in particular, in the two-party case broadcast can be implemented by directly sending the message.)

The computational complexity will be addressed after we describe the protocol.

To simplify the following exposition we will only consider the case
of two clients Alice and Bob. An extension to the case of a larger
number of clients is straightforward.

Another simplifying assumption is that the circuit $C$ consists of
$d$ layers, where each layer performs addition, subtraction, or multiplication operations
on values produced by the previous layer only. Circuits of an
arbitrary structure can be easily handled at a worst-case additive
cost of $O(nd)$, independently of the circuit size.
(This cost can be amortized away for almost
any natural instance of a big circuit. For instance, a sufficient
condition for eliminating this cost is that for any two connected
layers there are at least $n$ wires connecting between the layers.)

\subsection{Building Blocks}

The protocol relies on tools and sub-protocols
that we describe below.

\paragraph{Secret sharing for blocks.}
Shamir's secret sharing scheme~\cite{Shamir79} distributes a secret
$s\in F$ by picking a random degree-$\delta$ polynomial $p$ such that
$p(0)=s$, and sending to server $j$ the point $p(j)$. Here $F$ is a
finite field such that $|F|>n$. By $1,2,\ldots,n$ we denote distinct interpolation points, which in the case of a black-box access to $F$ can be picked at random. The generalization of Franklin and
Yung~\cite{FranklinYu92} achieves far better efficiency with a minor cost to the security level. In this scheme, a {\em block} of $\ell$ secrets
$(s_1,\ldots,s_\ell)$ is shared by picking a random degree-$\delta$
polynomial $p$ such that $p(1-j)=s_j$ for all $j$, and distributing
to server $j$ the point $p(j)$. (Here we assume that
$-\ell+1,\ldots,n$ denote $n+\ell$ distinct field elements.) Any set
of $\delta+1$ servers can recover the entire block of secrets by
interpolation. On the other hand, any set of $t=\delta-\ell+1$ servers
learn nothing about the block of secrets from their shares. (Secret
sharing schemes in which there is a gap between the privacy and
reconstruction thresholds are often referred to as ``ramp
schemes''.) For our purposes, we will choose $\ell$ to be a small
constant fraction of $n$ and $\delta$ a slightly bigger constant fraction
of $n$ (for instance, one can choose $\delta=n/3$ and $\ell=n/4$). This
makes the amortized communication overhead of distributing a field element constant, while maintaining secrecy against a constant fraction of the servers.

\paragraph{Adding and multiplying blocks.} Addition (or subtraction) and multiplication of shared blocks is analogous to the use of Shamir's scheme in the BGW protocol~\cite{BenOrGoWi88}. Suppose that a block $a=(a_1,\ldots,a_\ell)$
was shared via a polynomial $p_a$ and a block
$b=(b_1,\ldots,b_\ell)$ was shared via a polynomial $p_b$. The
servers can then locally compute shares of the polynomial $p_a+p_b$,
which are valid shares for the sum $a+b$ of the two blocks. If each
server multiplies its two local shares, the resulting $n$ points are
a valid secret-sharing using the degree-$(2\delta)$ polynomial $p=p_ap_b$
of the block $ab=(a_1b_1,\ldots,a_\ell b_\ell)$. Note, however, that
even if $p_a,p_b$ were obtained from a random secret sharing,
$p_ap_b$ is not a random degree-$(2\delta)$ secret sharing of $ab$. Thus,
if we want to reveal $ab$ we will need to mask $p_ap_b$ by a random
degree-$2d$ secret-sharing of a block of 0's before revealing it.
Also, in order to use $ab$ for subsequent computations we will need
to reduce its degree back to $\delta$.

\paragraph{Proving membership in a linear space.} The protocol will often
require a client to distribute to the servers a vector $v=(v_1,\ldots,v_n)$
(where each $v_j$ includes one or more field elements) while assuring them
that $v$ belongs to some linear space $L$. This should be done while
ensuring that the adversary does not learn more information about $v$ than
it is entitled to, and while ensuring the honest parties that the shares
they end up with are consistent with $L$. For efficiency reasons, we settle
for having the shares of the honest parties {\em close} to being consistent
with $L$. Since we will only use this procedure with $L$ that form an error
correcting code whose minimal distance is a large constant multiple of $\delta$,
the effect of few  ``incorrect'' shares can be undone via error-correction.
(In fact, in our setting of security with abort error detection will be
sufficient.) More concretely, our procedure takes input
$v=(v_1,\ldots,v_n)\in L$ from a dealer $D$ (Alice or Bob). In the presence
of an active, adaptive adversary who may corrupt any client and at most $t$
servers, it should have the following properties:

\begin{itemize}
\item
Completeness: If $D$ is uncorrupted then every honest server $j$ outputs
$v_j$.

\item Soundness: Suppose $D$ is corrupted. Except with negligible
probability, either all honest servers reject (in which case the dealer is
identified as being a cheater), or alternatively the joint outputs of all
$n$ servers are most $2t$-far (in Hamming distance) from some vector in
$v\in L$.

\item Zero-Knowledge: If $D$ is uncorrupted, the adversary's view can be
simulated from the shares $v_j$ of corrupted servers.
\end{itemize}

Verifiable Secret Sharing (VSS) can be obtained by applying the
above procedure on the linear space defined by the valid share
vectors. Note that in contrast to standard VSS, we tolerate some
inconsistencies to the shares on honest servers. Such inconsistencies will be handled by the robustness of the higher level protocol.

\paragraph{Implementing proofs of membership.}
We will employ a sub-protocol from~\cite{DamgardIs06} (Protocol~5.1) for
implementing the above primitive. This protocol amortizes the cost
of proving that many vectors $v^1,\ldots, v^q$ owned by the same
dealer $D$ belong to the same linear space $L$ by taking
random linear combinations of these
vectors together with random vectors from $L$ that are used for
blinding.
The high level structure of this protocol is as follows.
\begin{itemize}
  \item {\em Distributing shares.} $D$ distributes $v^1,\ldots,v^q$ to the servers.

  \item {\em Distributing blinding vectors.} $D$ distributes a random vector
  $r\in L$ that is used for blinding. (This step ensures
  the zero-knowledge property; soundness does not depend on the valid choice
  of this $r$.)

  \item {\em Coin-flipping.}
The players invoke the random field element oracle to obtain a length-$q$ vector defining a random linear combination of the $q$ vectors distributed by the dealer.  (In~\cite{DamgardIs06} this is implemented using distributed coin-flipping and an $\epsilon$-biased generator; in our setting this can be implemented directly by the clients in the OT-hybrid model. Moreover, in the case of two
  clients we let the other client, who does not serve as a dealer, pick $r$ on its own.)

  \item {\em Proving.} The dealer computes the linear combination
  of its vectors $v^i$ defined by $r$, and adds to it the corresponding blinding vector. It broadcasts the results.

  \item {\em Complaining.} Each server applies the linear
combination  specified by $r$  to its part of the vectors distributed by the dealer, and
  ensures that the result is consistent with the value broadcast in the
  previous step. If any inconsistency is detected, the server broadcasts a complaint and the protocol aborts.
Also, the protocol aborts if the vector broadcasted by the dealer is not in $L$.

\item {\em Outputs.} If no server broadcasted a complaint, the servers output the shares distributed by
  the dealer in the first step (discarding the blinding vectors and the
  results of the coin-flips).
  \end{itemize}
In the case of a static corruption of servers, if the shares dealt to honest servers are inconsistent, the protocol will abort except with $1/|F|$ probability, which is assumed to be negligible in $k$. The adaptive case is a bit more involved, since the adversary can choose which servers to corrupt only after the random linear combination is revealed. This case is easy to analyze via a union bound (which requires that $|F|>{n\choose t}$). Alternatively, a tighter analysis shows that if $|F|$ is superpolynomial in $k$ then, except with negligible probability, either the protocol aborts or there exists a small set $B$ of servers such that all shares held by the honest servers {\em excluding} those in $B$ are consistent with a valid codeword from $L$. This condition is sufficient for the security of the protocol.

We will sometimes employ the above protocol in a scenario where
vectors $v^1,\ldots,v^q$ are already distributed between the servers
and known to the dealer, and the dealer wants to convince the
servers that these shares are consistent with $L$. In
such cases we will employ the above sub-protocol without the first
step.

\paragraph{Proving global linear constraints.} We will often need to deal
with a more general situation of proving that vectors $v^1,\ldots,v^q$ not
only lie in the same space $L$, but also satisfy additional global
constraints. A typical scenario applies to the case where the $v^i$ are
shared blocks defined by degree-$\delta$ polynomials. In such a case, we will
need to prove that the secrets shared in these blocks satisfy a specified
replication pattern (dictated by the structure of the circuit $C$ we want to
compute). Such a replication pattern specifies which entries in the $q$
blocks should be equal. An observation made in~\cite{DamgardIs06} is
that: (1) such a global constraint can be broken into at most $q\ell$ atomic
conditions of the type ``entry $i$ in block $j$ should be equal to entry
$i'$ in block $j'$'', and (2) by grouping these atomic conditions into
$\ell^2$ types defined by $(i,i')$, we can apply the previous verification
procedure to simultaneously verify all conditions in the same type. That is,
to verify all conditions of type $(i,i')$ each server concatenates his two
shares of every pair of blocks that should be compared in this type, and
then applies the previous verification procedure with $L$ being the linear
space of points on degree-$\delta$ polynomials $(p_1,p_2)$ which satisfy the
constraint $p_1(1-i)=p_2(1-i')$.
Unlike~\cite{DamgardIs06} we will also employ the above procedure in the
case where $p_1,p_2$ may be polynomials of different degrees (e.g., $\delta$ and
$2\delta$), but the same technique applies to this more general case as well.

\subsection{The Protocol}

The protocol is a natural extension of the protocol from~\cite{DamgardIs06},
which can be viewed as handling the special case of constant-depth circuits using a
constant number of rounds. We handle circuits of depth $d$ by using $O(d)$
rounds of interaction. The protocol from~\cite{DamgardIs06} handles general
functions by first encoding them into $\NCz$ functions, but such an encoding step is too expensive
for our purposes and in any case does not apply to the arithmetic setting. The protocol is simplified by the fact that we only need to achieve ``security with abort'', as opposed to the full security of the protocol from~\cite{DamgardIs06}.

Recall that we assume the circuit $C$ to consist of $d$ layers
each, and that each gate in layer $i$ depends on two outputs from from layer
$i-1$.

The high level strategy is to pack the inputs for each layer into blocks in
a way that allows to evaluate multiplication, addition, and subtraction gates in this layer ``in parallel''
on pairs of blocks. That is, the computation of the layer will consist of
disjoint parallel computations of the form $a\cdot b$, $a+b$, and $a-b$,   where $a$ and $b$
are blocks of $\ell$ binary values and the ring operation is performed
coordinate-wise. This will require blocks to be set up so that certain
inputs appear in several places.  Such a replication pattern will be
enforced using the procedure described above. Throughout the protocol, if a prover is caught cheating the protocol is aborted.

The protocol will proceed as follows:
\begin{enumerate}
  \item {\em Sharing inputs.} The clients arrange their inputs into blocks
  with a replication pattern that sets up the parallel evaluation for the
  first layer (namely, so that the first layer will be evaluated by applying the same arithmetic operation to blocks 1,2, to blocks 3,4, etc.). Each client then
  secret-shares its blocks, proving to the servers that the shares of each
  block agree with a polynomial of degree at most $\delta$ and that the secrets
  in the shared blocks satisfy the replication pattern determined by the
  first layer of $C$. (Such proofs are described in the previous section.)

If we want to enforce input values to be boolean (namely, either 0 or 1) this can be done a standard
  way by letting the servers
  securely reveal $1-a\cdot a$ for each block $a$ (which should evaluate to
  a block of $0$'s).

  \item {\em Evaluating $C$ on shared blocks.} The main part of the protocol
  is divided into $d$ phases, one for evaluating each layer of $C$. For
  $h=1,2,\ldots,d$ we repeat the following:
  \begin{itemize}
	\item  {\em Combining and and blinding.} At the beginning of the phase, the
	inputs to layer $h$ are arranged into blocks, so that the outputs of
	layer $h$ can be obtained by performing some arithmetic operation on each consecutive pair of
	blocks. Moreover, each block is secret-shared using a degree-$\delta$
	polynomial. Addition and subtraction on blocks can be handled non-interactively by simply letting each server locally add or subtract its two shares. In the following we address the more involved case of multiplication.
We would like to reveal the outputs of the layer to Alice,
	masked by random blinding blocks picked by Bob. For this, Bob will VSS
	random blocks, one for each block of output. The secret-sharing of these
	blocks is done using polynomials of degree $2\delta$.

(Again, verifying that the shares distributed by Bob are valid is done using the procedure described above.) For every pair of input blocks
	$a,b$ whose product is computed, each server $j$ locally computes the degree-2 function $c(j)=a(j)b(j)+r(j)$, where $a(j),b(j)$ are its
	shares of $a,b$ and $r(j)$ is its share of the corresponding blinding
	block $r$ distributed by Bob. For each pair of blocks combined in this
	way, the server sends his output (a single field element) to Alice. Note
	that the points $c(j)$ lie on a random degree-$2\delta$ polynomial $p_c$, and
	thus reveal no information about $a,b$. Moreover, the polynomial $p_c$
	can be viewed as some valid degree-$2\delta$ secret sharing of the block $c=ab+r$.

\item {\em Reducing degree and rearranging blocks for layer $h+1$}.
	Alice checks that the points $c(j)$ indeed lie on a polynomial $p_c$ of
	degree at most $2\delta$ (otherwise she aborts).
Then she recovers the blinded
	output block $c=ab+r$ by letting $c_j=p_c(1-j)$. Now Alice uses all
	blinded blocks $c$ obtained in this way to set up the (blinded) blocks
	for computing layer $h+1$.

For this, she sets up a new set of blocks that are obtained by applying a projection (namely, permuting and copying) to the blocks $c$ that corresponds to the structure of layer $h+1$. (In particular, the number
	of new blocks in which an entry in a block $c$ will appear is precisely
	the fan-out of the corresponding wire in $C$.) Let $c'$ denote the
	rearranged blinded blocks.

	Now Alice secret-shares each block $c'$ using a degree-$\delta$ polynomial $p_{c'}$. She needs to prove to the servers that the shares she
	distributed are of degree $\delta$ and that the entries of the shared blocks
	$c'$ satisfy the required relation with respect to the blocks $c$ that
	are already shared between the servers using degree-$2\delta$ polynomials.
	Such a proof can be efficiently carried out using the procedure
	described above. Note that pairs of polynomials $(p_c,p_{c'})$ such that
	$p_c$ is of degree at most $2\delta$, $p_{c'}$ is of degree at most $\delta$, and
	$p_c(i)=p_{c'}(j)$ form a linear space (for any fixed $i,j$), and hence
	the $2n$ evaluations of such polynomials on the points that correspond
	to the servers form a linear subspace of $F^{2n}$. Also, the
	corresponding code will have a large minimal distance because of the
	degree restriction, which ensures that the adversary cannot corrupt a
	valid codeword without being detected (or even corrected, in the setting of security without abort).

	\item {\em Unblinding.} To set up the input blocks for the evaluation of
	layer $h+1$, we need to cancel the effect of the blinding polynomials
	$p_r$ distributed by Bob. For this, Bob distributes random degree-$\delta$
	unblinding polynomials $p_{r'}$ that encode blocks $r'$ obtained by
	applying to the $r$ blocks the same projection defined by the structure
	of layer $h+1$ that was applied by Alice. Bob proves that the
	polynomials $p_{r'}$ are consistent with the $p_r$ similarly to the
	corresponding proof of Alice in the previous step. (In fact, both
	sharing the $p_{r'}$ and proving their correctness could be done in the
	first step.) Finally, each server obtains its share of an input block
	$a$ for layer $h+1$ by letting $a(j)=c'(j)-r'(j)$.
  \end{itemize}

  \item {\em Delivering outputs.} The outputs of $C$ are revealed to the
  clients by having the servers send their shares of each output block to
  the client who should receive it. The client checks that the $n$ values received for each block are consistent with a degree-$d$ polynomial (otherwise it aborts), and recovers the output of this block.

\end{enumerate}

\paragraph{Communication complexity.} By the choice of parameters, the communication overhead of encoding each block of field elements is constant. Accounting for narrow layers (whose size is smaller than one block) as well as wires between non-adjacent layers, we get an additive arithmetic communication overhead of $O(nd)$ (accounting for the worst-case scenario of one may need to maintain a block of values to be used in each subsequent layer). As noted above, this overhead can be reduced or even eliminated in most typical cases.
Finally, the cost of picking random field elements for the random linear combinations can be reduced via the use of (arithmetic) $\epsilon$-biased generators or directly improved via an alternative procedure described below.

\paragraph{Computational complexity.}
Using known FFT-based techniques for multipoint polynomial evaluation and interpolation, both the secret sharing and the reconstruction of a block of length $\ell$ with $n=O(\ell)$ servers can be done with arithmetic complexity of $O(\ell\log^2 \ell)$~\cite{vzGG99}.
Choosing evaluation points which are $n$-th roots of unity, this complexity can be reduced to $O(\ell\log \ell)$ (at the expense of sacrificing the black-box use of the field).
The computational bottleneck in the above protocol is the procedure for verifying that shared blocks satisfy the replication pattern corresponding to $C$. This can be improved by converting $C$ into an equivalent circuit $C'$ which reduces the overhead of this procedure. A more direct and efficient way for implementing the above procedure can be obtained by adapting an idea from~\cite{Groth08} to our setting. To test that a set of $M$ blocks $v_i$ satisfies a given replication pattern, pick a set of $M$ random blocks $r_i$ and test
that $\sum v_ir_i=\sum v_ir'_i$, where the blocks $r'_i$ are obtained by permuting the blocks in $r_i$ along the ``cycles'' defined by the replication pattern. (That is, for each set of positions in the blocks $v_i$ which should be tested to be equal, apply a cyclic shift to the values in the corresponding entries of the blocks $r_i$.) This sum of inner products can be computed by adding up all pointwise-products of $v_i$ and $r'_i$ together with a random block whose entries add up to 0.

\end{document}